\documentclass[aos, authoryear]{imsart}

\usepackage{amsmath}
\usepackage{amssymb}
\usepackage{amsthm}
\usepackage{enumitem}
\usepackage[totalwidth=450pt, totalheight=630pt]{geometry}
\usepackage{graphicx}
\usepackage{hyperref} 
\usepackage{mathrsfs}
\usepackage{natbib} 
\usepackage[linesnumbered, ruled]{algorithm2e} 
\usepackage{cleveref}
\usepackage{tikz}
\usetikzlibrary{calc}
\usepackage{verbatim}
\usepackage{xcolor}

\arxiv{arXiv:1610.03944}

\startlocaldefs

\newlist{inlinelist}{enumerate*}{1}
\setlist*[inlinelist]{label=(\arabic*)}

\SetAlgoCaptionSeparator{.}

\definecolor{mid_green}{HTML}{99d8c9}

\newtheorem{corollary}{Corollary}
\newtheorem{definition}{Definition}
\newtheorem{lemma}[corollary]{Lemma}
\newtheorem{theorem}[corollary]{Theorem}

\theoremstyle{definition}
\newtheorem{example}{Example}

\newtheoremstyle{custom}{}{}{\itshape}{}{\bfseries}{.}{.5em}{\thmnote{#3}}
\theoremstyle{custom}
\newtheorem*{customthm}{Theorem}

\setlength{\tabcolsep}{1em}
\setlist{listparindent=\parindent, parsep=0pt}

\newcommand{\PP}{\mathbb{P}}
\newcommand{\RR}{\mathbb{R}}
\newcommand{\ZZ}{\mathbb{Z}}
\DeclareMathOperator*{\argmax}{arg\,max}

\endlocaldefs

\begin{document}

\begin{frontmatter}

\title{Rank Verification for Exponential Families}
\runtitle{Rank Verification for Exp. Families}

\begin{aug}
  \author{\fnms{Kenneth} \snm{Hung}\corref{}\ead[label=e1]{kenhung@berkeley.edu}}
  \and
  \author{\fnms{William} \snm{Fithian}\ead[label=e2]{wfithian@berkeley.edu}}

  \runauthor{K. Hung and W. Fithian}

  \affiliation{University of California, Berkeley}

  \address{Department of Mathematics\\ 951 Evans Hall, Suite 3840\\ Berkeley, CA 94720-3840\\
  \printead{e1}}

  \address{Department of Statistics\\ 301 Evans Hall\\ Berkeley, CA 94720\\
          \printead{e2}}

\end{aug}

\begin{abstract}
Many statistical experiments involve comparing multiple population groups. For example, a public opinion poll may ask which of several political candidates commands the most support; a social scientific survey may report the most common of several responses to a question; or, a clinical trial may compare binary patient outcomes under several treatment conditions to determine the most effective treatment. Having observed the ``winner'' (largest observed response) in a noisy experiment, it is natural to ask whether that candidate, survey response, or treatment is actually the ``best'' (stochastically largest response). This article concerns the problem of {\em rank verification} --- post hoc significance tests of whether the orderings discovered in the data reflect the population ranks. For exponential family models, we show under mild conditions that an unadjusted two-tailed pairwise test comparing the first two order statistics (i.e., comparing the ``winner'' to the ``runner-up'') is a valid test of whether the winner is truly the best. We extend our analysis to provide equally simple procedures to obtain lower confidence bounds on the gap between the winning population and the others, and to verify ranks beyond the first.
\end{abstract}

\begin{keyword}[class=MSC]
\kwd[Primary ]{62F07}
\kwd[; secondary ]{62F03}
\end{keyword}

\begin{keyword}
\kwd{ranking}
\kwd{selective inference}
\kwd{exponential family}
\end{keyword}

\end{frontmatter}

\section{Introduction}
\label{sec:intro}

\subsection{Motivating Example: Iowa Republican Caucus Poll}
\label{sec:iowa}

\Cref{tbl:poll} shows the result of a Quinnipiac University poll asking 890 Iowa Republicans their preferred candidate for the Republican presidential nomination \citep{quinnipiac}. Donald Trump led with $31\%$ of the vote, Ted Cruz came second with $24\%$, Marco Rubio third with $17\%$, and ten other candidates including ``Don't know'' trailed behind. 

\begin{table}[htbp]
\centering
\begin{tabular}{c c c c}
	\hline
	Rank & Candidate & Result & Votes \\
	\hline
	$1$ * & Trump & $31\%$ & $276$ \\
	$2$ * & Cruz & $24\%$ & $214$ \\
	$3$ * & Rubio & $17\%$ & $151$ \\
	$4$ * & Carson & $8\%$ & $71$ \\
	$5$ & Paul & $4\%$ & $36$ \\
	$6$ & Bush & $4\%$ & $36$ \\
	$7$ & Huckabee & $3\%$ & $27$ \\
	$\vdots$ & $\vdots$ & $\vdots $ & $\vdots$ \\
	\hline
\end{tabular}
\caption{Results from a February 1, 2016 Quinnipiac University poll of $890$ Iowa Republicans. To compute the last column (Votes), we make the simplifying assumption that the reported percentages in the third column (Result) are raw vote shares among survey respondents. The asterisks indicate that the rank is verified at level $0.05$ by a stepwise procedure.}
\label{tbl:poll}
\end{table}

Seeing that Trump leads this poll, several salient questions may occur to us: Is Trump really winning, and if so by how much? Furthermore, is Cruz really in second, is Rubio really in third, and so on? Note that there is implicitly a problem of multiple comparisons here, because if Cruz had led the poll instead, we would be asking a different set of questions (``Is Cruz really winning,'' etc.). Indeed, the selection issue appears especially pernicious due to the so-called ``winner's curse'': given that Trump leads the poll, it more likely than not overestimates his support.

Nevertheless, if we blithely ignore the selection issue, we might carry out the following analyses to answer the questions we posed before at significance level $\alpha = 0.05$. We assume for simplicity that the poll represents a simple random sample of Iowa Republicans; i.e., that the data are a multinomial sample of size $890$ and underlying probabilities $\left(\pi_{\text{Trump}}, \pi_{\text{Cruz}}, \ldots\right)$. (The reality is a bit more complicated: before releasing the data, Quinnipiac has post-processed it to make the reported result more representative of likely caucus-goers. The raw data is proprietary.)

\begin{enumerate}
\item {\em Is Trump really winning?} If Trump and Cruz were in fact tied, then Trump's share of their combined 490 votes would be distributed as $\text{Binomial}\left(490, 0.5\right)$. Because the (two-tailed) $p$-value for this pairwise test is $p = 0.006$, we reject the null and conclude that Trump is really winning.

\item {\em By how much?} Using an exact $95\%$ interval for the same binomial model, we conclude Trump has at least $7.5\%$ more support than Cruz (i.e., $\pi_{\text{Trump}} \ge 1.075 \,\pi_{\text{Cruz}}$) and also leads the other candidates by at least as much.

\item {\em Is Cruz in second, Rubio in third, etc.?} We can next compare Cruz to Rubio just as we compared Trump to Cruz (again rejecting because $214$ is significantly more than half of 365), then Rubio to Carson, and so on, continuing until we fail to reject. The first four comparisons are all significant at level $0.05$, but Paul and Bush are tied so we stop.
\end{enumerate}

Perhaps surprisingly, all of the three procedures described above are statistically valid despite their ostensibly ignoring the implicit multiple-comparisons issue. In other words, Procedures 1 and 2 control the Type I error rate at level $\alpha$ and Procedure 3 controls the familywise error rate (FWER) at level $\alpha$. The remainder of this article is devoted to justifying these procedures for the multinomial family, and extending to analogous procedures in other exponential family settings. While methods analogous to Procedures 1 and 2 have been justified previously for balanced independent samples from log-concave location families \citep{Gutmann:1987fk,Stefansson:1988wj}, they have not been justified in exponential families before now.

\subsection{Generic Problem Setting and Main Result}
\label{sec:probset}

Generically, we will consider data drawn from an exponential family model with density
\begin{equation}
\label{eq:expfam}
X \sim \exp\left(\theta'x - \psi\left(\theta\right)\right) g\left(x\right),
\end{equation}
with respect to either the Lebesgue measure on $\RR^n$ or counting measure on $\ZZ^n$. We assume further that $g\left(x\right)$ is symmetric with respect to permutation, and Schur concave, a mild technical condition defined in \Cref{sec:maj}. In addition to the multinomial family, model~\eqref{eq:expfam} also encompasses settings such as comparing independent binomial treatment outcomes in a clinical trial, competing sports teams under a Bradley--Terry model, entries of a Dirichlet distribution, and many more; see \Cref{sec:maj} for these and other examples.

We will generically use the term {\em population} to refer to the treatment group, sports team, political candidate, etc.\ represented by a given random variable $X_j$. As we will see, $\theta_j \ge \theta_k$ if and only if $X_j$ is stochastically larger than $X_k$; thus, there is a well-defined stochastic ordering of the populations that matches the ordering of the entries of $\theta$. We will refer to the population with maximal $\theta_j$ as the {\em best}, the population with second largest $\theta_j$ as the {\em second best}, the one with maximal $X_j$ as the {\em winner}, and the one with the second-largest $X_j$ as the {\em runner-up}, where ties between observations are broken randomly to obtain a full ordering. Following the convention in the ranking and selection literature, we assume that if there are multiple largest $\theta_j$, then one is arbitrarily marked as the best. Note that in cases where it is more interesting to ask which is the smallest population (for example, if $X_j$ is the number of patients on treatment $j$ who suffer a heart attack during a trial) we can change the variables to $-X$ and the parameters to $-\theta$; this does not affect the Schur concavity assumption. 

Write the order statistics of $X$ as
$$X_{[1]} \ge X_{[2]} \ge \cdots \ge X_{[n]},$$
where $[j]$ will denote the random index for the $j$-th order statistic. Thus, $\theta_{[j]}$ is the entry of $\theta$ corresponding to the $j$-th order statistic of $X$ (so $\theta_{[1]}$ might {\em not} equal $\max_j \theta_j$, for example).

In each of the above examples, there is a natural exact test we could apply to test $\theta_j=\theta_k$ for any two {\em fixed} populations $j$ and $k$. In the multinomial case, we would apply the conditional binomial test based on the combined total $X_j+X_k$ as discussed in the previous section. For the case of independent binomials we would apply Fisher's exact test, again conditioning on $X_j+X_k$. These are both examples of a generic UMPU pairwise test in which we condition on the other $n-2$ indices (notated $X_{\setminus \{j,k\}}$) and $X_j+X_k$, and reject the null if $X_j$ is outside the $\alpha/2$ and $1-\alpha/2$ quantiles of the conditional law $\mathcal{L}_{\theta_j=\theta_k}(X_j \mid X_j+X_k, X_{\setminus\{j,k\}})$. Crucially, this null distribution does not depend on the value of $\theta$ provided that $\theta_j=\theta_k$. We call this test the (two-tailed) {\em unadjusted pairwise test} since it makes no explicit adjustment for selection. Similarly, inverting this test for other values of $\theta_j-\theta_k$ yields an {\em unadjusted pairwise confidence interval}. (To avoid trivialities in the discrete case, we assume these procedures are appropriately randomized at the rejection thresholds to give exact level-$\alpha$ control.)

Generalizing the procedures described in \Cref{sec:iowa} we obtain the following:
\begin{enumerate}
\item {\em Is the winner really the best?} To test the hypothesis $H:\; \theta_{[1]} \leq \max_{j \neq [1]} \theta_j$: Carry out the unadjusted pairwise test comparing the winner to the runner-up. If the test rejects at level $\alpha$, reject $H$ and declare that the winner is really the best.
\item {\em By how much?} To construct a lower confidence bound for $\theta_{[1]} - \max_{j \neq [1]} \theta_j$: Construct the unadjusted pairwise confidence interval comparing the winner to the runner-up, and report the lower confidence bound obtained for $\theta_{[1]} - \theta_{[2]}$ if it is nonnegative, report $-\infty$ otherwise.
\item {\em Is the runner-up really the second best, etc.?} Continue by comparing the runner-up to the second runner-up, again using the unadjusted pairwise test, and so on down the list comparing adjacent values. Stop the first time the test does not reject; if there are $j$ rejections, declare that 
\[
\theta_{[1]} > \theta_{[2]} > \cdots > \theta_{[j]} > \max_{k > j} \theta_{[j]}
\]
\end{enumerate}
Procedures 2 and 3 are conservative stand-ins for exact, but slightly more involved, conditional inference procedures. In particular, as we will see, reporting $-\infty$ in Procedure 2 is typically much more conservative than is necessary.

We now state our main theorem: under a mild technical assumption, Procedures 1--3 described above are statistically valid, even accounting for the selection.
\begin{theorem}
\label{thm:main}
Assume the model~\eqref{eq:expfam} holds and $g\left(x\right)$ is a Schur-concave function. Then:
\begin{enumerate}
\item Procedure 1 has exact level $\alpha$ conditional on $H$ being true (conditional on the best population not winning), and marginally has level $\alpha \cdot \PP(H \text{ is true}) \leq \alpha \left(1-\frac{1}{n}\right)$.
\item Procedure 2 gives a conservative $1-\alpha$ lower confidence bound for $\theta_{[1]} - \max_{j \ne [1]} \theta_{j}$.
\item Procedure 3 is a conservative stepwise procedure with FWER no larger than $\alpha$.
\end{enumerate}
\end{theorem}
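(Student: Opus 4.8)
The plan is to prove each part by conditioning away all nuisance parameters, reducing to a one-dimensional comparison whose null law is symmetric, and then controlling the extra randomness introduced by selection.

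\emph{Reduction to a symmetric pairwise null.} First I would fix the two populations $j,k$ that turn out to be the winner and runner-up and condition on $X_{\setminus\{j,k\}}$ and on $S=X_j+X_k$. Because the joint density is an exponential family, the conditional law of $X_j$ is a one-parameter exponential family with natural parameter $\delta=\theta_j-\theta_k$ and carrier $h(x_j)=g(x_j,S-x_j,X_{\setminus\{j,k\}})$; this is precisely the law defining the unadjusted pairwise test, free of $\theta$ when $\delta=0$. Schur concavity enters exactly here: since $g$ is symmetric and Schur concave, for fixed $S$ the carrier satisfies $h(x_j)=h(S-x_j)$, so under $\delta=0$ the pairwise law is symmetric about $S/2$. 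Hence the two-tailed quantiles obey $c_{\text{lo}}+c_{\text{hi}}=S$, and because the winner satisfies $X_{[1]}\ge S/2>c_{\text{lo}}$, the two-tailed rejection region collapses, when read off the winner, to the single upper-tail event $\{X_{[1]}>c_{\text{hi}}\}$. This is the event I must bound.

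\emph{The crux: conditional level under $H$.} The difficulty is that conditioning further on the labeled identities of the bottom $n-2$ populations is fatal: it can place the best population at the bottom and let an extreme $\delta$ force rejection with probability near one. The correct move is to condition only on the \emph{unlabeled} multiset of the bottom $n-2$ values together with $S$, leaving both the split of $S$ and the assignment of populations to values random. Given these values, the assignment follows the permutation exponential family with weights proportional to $\exp(\sum_i\theta_i v_{\pi(i)})$, in which $g$ cancels by symmetry. The structural fact I would establish is a monotonicity: by a rearrangement/MLR argument, the conditional probability that the best population receives the top value is nondecreasing as the split grows more extreme, so $q(v_1)=\PP(\text{best is not the winner}\mid\text{values})$ is nonincreasing in the top value $v_1$ and is therefore negatively associated with the rejection region $\{v_1>c_{\text{hi}}\}$. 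Combining this negative association with the symmetric $\alpha/2$ tail of the reference law yields $\PP(\text{reject}\mid H)\le\alpha$, with equality attained at $\theta_{[1]}=\theta_{[2]}$, where the randomly broken best is independent of the data and the ratio is literally $(\alpha/2)/(1/2)$. The marginal bound then follows from a second rearrangement argument showing the best population is the single most likely winner, so $\PP(\text{best wins})\ge 1/n$ and $\PP(H)\le 1-1/n$; since any rejection on $H^c$ is a correct rejection, the marginal Type I error is $\PP(\text{reject}\cap H)=\PP(\text{reject}\mid H)\,\PP(H)\le\alpha(1-1/n)$.

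\emph{Parts 2 and 3.} For Part 2 I would invert the family of Part-1 tests of $\theta_{[1]}-\max_{j\ne[1]}\theta_j\le c$: the pairwise lower bound for $\theta_{[1]}-\theta_{[2]}$ is reported when nonnegative and $-\infty$ otherwise, and its one-sided coverage for the smaller quantity $\theta_{[1]}-\max_{j\ne[1]}\theta_j$ follows from the same conditional stochastic-dominance argument, with the $-\infty$ report covering the non-rejection event; the procedure is conservative precisely because $\max_{j\ne[1]}\theta_j\ge\theta_{[2]}$ in general. For Part 3 I would use a fixed-sequence argument: the familywise error equals the probability that the step at the first index whose claim is false rejects, and that step is exactly a ``winner of the remaining pool versus its runner-up'' comparison, to which Part 1 applies after conditioning on the already-peeled top values. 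This reduction needs two preservation lemmas, both of which hold: restricting a Schur-concave $g$ to fixed coordinates keeps it Schur concave (appending equal coordinates preserves majorization), and intersecting with the order constraint $\{\max\le t\}$ preserves Schur concavity (that set is majorization-downward-closed). Chaining these gives $\mathrm{FWER}\le\alpha$.

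\emph{Main obstacle.} I expect the hard part to be the sharp inequality in the crux step: turning the qualitative monotonicity of $q$ into the exact bound $\le\alpha$. Because the assignment weight multiplying the reference law is a sum over permutations rather than a single exponential tilt, it is not itself monotone in $v_1$, so a bare Chebyshev/FKG comparison only gives $\le\PP(\text{reject})$, a quantity that can exceed $\alpha$. I anticipate that the resolution requires using Schur concavity a second time---beyond the symmetry of the pairwise null---to control this averaged assignment weight on the upper tail, together with careful handling of the discrete and selection-induced truncation of the split range so that the reference tail still integrates to exactly $\alpha/2$.
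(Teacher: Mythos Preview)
Your proposal has a genuine gap in Part 1, and you identify it yourself: the negative-association argument between the rejection event and the event $H$ only yields $\PP(\text{reject}\mid H)\le \PP(\text{reject})$, and the unconditional rejection probability is not bounded by $\alpha$. Your hope that ``using Schur concavity a second time'' will control the averaged assignment weight on the upper tail is not a proof, and in fact this route does not seem to close: the permutation weight is not monotone in $v_1$, and no amount of extra conditioning makes the unconditional rejection probability the right benchmark. You are also using Schur concavity only for the symmetry $h(x_j)=h(S-x_j)$, which follows from mere permutation symmetry of $g$; the Schur-concavity hypothesis is doing no real work in your argument.

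The paper takes a completely different route that avoids this difficulty. Conditional on the winner being index $1$, it writes the null $H_{01}:\theta_1\le\max_{j>1}\theta_j$ as the union $\bigcup_{j>1}H_{01j}:\theta_1\le\theta_j$, and for \emph{each} competitor $j$ (not just the runner-up) it constructs a selective one-sided $p$-value $p_{1j}$ by conditioning on $M_{1j}$, $X_{\setminus\{1,j\}}$, and the selection event $A_1=\{X_1=\max\}$. By Berger's union--intersection lemma, $p_{1*}=\max_j p_{1j}$ is a valid $p$-value for $H_{01}$ conditional on $A_1$. The substantive work---and the real use of Schur concavity---is the inequality $p_{12}\ge p_{1j}$ for every $j$: after reducing to $n=3$ by fixing the other coordinates, both $p$-values are ratios of integrals of $g$ along rays in the simplex, and the principle of transfer gives a pointwise majorization relation between the integrands on the two rays, so Schur concavity of $g$ yields the integral comparison. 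Finally, because the unselected null law of $D_{12}$ is symmetric about $0$, the selective one-sided $p$-value $p_{12}$ is exactly half the unadjusted two-sided $p$-value, so the level-$\alpha$ selective test coincides with the unadjusted two-tailed test. This gives exact level $\alpha$ conditional on $A_1$ for each wrong winner, and summing over wrong winners yields the marginal bound; the $1/n$ factor comes from a separate coupling lemma. Your Parts 2 and 3 sketches are closer in spirit to the paper, but they inherit the Part-1 gap, since both rely on Part 1 as a black box.
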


Note that Theorem~\ref{thm:main} implies that we could actually replace $\alpha$ with $\frac{n}{n-1} \alpha$ to obtain a more powerful version of Procedure 1 when $n$ is not too large.

We define Schur-concavity and discuss its properties in \Cref{sec:maj}. Because any log-concave and symmetric function is Schur-concave, \Cref{thm:main} applies to all of the cases discussed above. The proof combines the conditional selective-inference framework of \citet{Fithian:2014ws} with classical multiple-testing methods, as well as new technical tools involving majorization and Schur-concavity.

Note that these procedures make an implicit adjustment for selection because they use two-tailed, rather than one-tailed, unadjusted tests. If we instead based our tests on an independent realization $X^* = (X_1^*, \ldots, X_n^*)$ then, for example, Procedure 1 could use a right-tailed version of the unadjusted pairwise test. In the case $n = 2$, Procedure 1 amounts to a simple two-tailed test of the null hypothesis $\theta_1 = \theta_2$, and it is intuitively clear that a one-tailed test would be too liberal. More surprising is that, no matter how large $n$ is, Procedures 1--3 require no further adjustment beyond what is required when $n = 2$. 

\subsection{Related work}
\label{sec:related}

Rank verification has been studied extensively in the ranking and selection literature. See \citet{Gupta:1971wk, Gupta:1985bj} for surveys of the subset selection literature. The two main formulations of ranking and selection are closely related to procedures for multiple comparisons with the best treatment \citep{Edwards:1983eo,Hsu:1984gb},  but more powerful methods are available in some cases for procedures involving only the first sample rank, the problem of comparisons with the sample best; see \citet{Hsu:1996} for an overview and discussion of the relationships between these problems. 

Comparisons with the sample best have been especially well-studied and the validity of Procedures 1 and 2 have been established in a different setting: balanced independent samples from log-concave location families. \citet{Gutmann:1987fk} prove the validity of Procedure 1 in this setting, and \citet{Bofinger:1991hv,Maymin:1992fz, Karnnan:2009iv} give similar results for other models including scale and location-scale families. \citet{Stefansson:1988wj} provide an alternative proof for the validity of Procedure 1 in the same setting, leading to a lower confidence bound analogous to that of Procedure 2; interestingly, the proof involves a very early application of the partitioning principle, later developed into fundamental technique in multiple comparisons \citep{Finner:2002ju}. These results use very different technical tools than the ones we use here, require independence between the different groups (ruling out, for example, the multinomial family), and do not address the exponential family case. Because most exponential families are not location-scale families (the Gaussian being a notable exception), and because our results involve more general dependence structures, both our proof techniques and our technical results are complementary to the techniques and results in the above works.

For the multinomial case, \citet{Gupta:1967wg}, discussed in \Cref{sec:subsetsel}, remain the state of the art in finite-sample tests; \citet{Gupta:1976vd} discuss related approaches for Poisson models. \citet{Berger:1980ev} mentions an alternative, simpler rule which performs a binomial test on each population, but its power does not necessarily increase as the size $m$ of observations increases in cases like $\text{Multinomial}(m; 2/3, 1/3, 0, \ldots, 0)$. \citet{Nettleton:2009ht} proves validity for an asymptotic version of the winner-versus-runner-up test, and \citet{Gupta:1989fe} consider an empirical Bayes approach for selecting the best binomial population wherein a parametric prior distribution is assumed for the success probabilities for the different populations. \citet{Ng:2007cn} discuss an exact test for a modified problem in which the maximum count is fixed instead of the total count; that is, we sample until the leading candidate has at least $m$ votes. As \Cref{sec:subsetsel} shows, our test can be much more powerful than the one in \citet{Gupta:1967wg}, especially if there are many candidates, because of the way our critical rejection threshold for $X_{[1]} - X_{[2]}$ adapts to the data. Thus, our work closes a significant gap in the ranking and selection literature, extending the result of \citet{Gutmann:1987fk} and others to new families like the multinomial, independent binomials, and many others.

\subsection{Outline}

\Cref{sec:maj} defines Schur concavity, and gives several examples satisfying this condition. \Cref{sec:winnerbest} justifies Procedure 1 and compares its power to that of \citet{Gupta:1967wg}. \Cref{sec:confbound,sec:stepwise} justify Procedures 2 and 3 respectively, and \Cref{sec:disc} concludes.

\section{Majorization and Schur concavity}
\label{sec:maj}

\subsection{Definitions and basic properties}

We start by reviewing the notion of {\em majorization}, defined on both $\RR^n$ and $\ZZ^n$.

\begin{definition}
For two vectors $a$ and $b$ in $\RR^n$ (or $\ZZ^n$), suppose sorting the two vectors in descending order gives
$a_{\left(1\right)} \ge \cdots \ge a_{\left(n\right)}$ and $b_{\left(1\right)} \ge \cdots \ge b_{\left(n\right)}$. We say that $a \succeq b$ ($a$ majorizes $b$) if for $1 \le i < n$,
\begin{align*}
a_{\left(1\right)} + \cdots + a_{\left(i\right)} & \ge b_{\left(1\right)} + \cdots + b_{\left(i\right)}, \quad \text{and}\\
a_{\left(1\right)} + \cdots + a_{\left(n\right)} & = b_{\left(1\right)} + \cdots + b_{\left(n\right)}.
\end{align*}
This forms a partial order in $\RR^n$ (or $\ZZ^n$).
\end{definition}

Intuitively, majorization is a partial order that monitors the evenness of a vector: the more even a vector is, the ``smaller'' it is. There are two properties of majorization that we will use in the proofs.

\begin{lemma}\leavevmode
\label{lma:twoprop}
\begin{enumerate}
\item Suppose $\left(x_1, x_2, x_3, \ldots\right)$ and $\left(x_1, y_2, y_3, \ldots\right)$ are two vectors in $\RR^n$. Then
\[\left(x_1, x_2, x_3, \ldots\right) \succeq \left(x_1, y_2, y_3, \ldots\right) \text{ if and only if } \left(x_2, x_3, \ldots\right) \succeq \left(y_2, y_3, \ldots\right).\]
\item (Principle of transfer) If $x_1 > x_2$ and $t \ge 0$, then
\[\left(x_1 + t, x_2, x_3, \ldots\right) \succeq \left(x_1, x_2 + t, x_3, \ldots\right).\]
If $t \le 0$, the majorization is reversed.
\end{enumerate}
\end{lemma}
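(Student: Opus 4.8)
The plan is to route both parts through the classical convex-function (Hardy--Littlewood--P\'olya) characterization of majorization: for two vectors $p, q$ of the same length, $p \succeq q$ if and only if $\sum_i \phi(p_i) \ge \sum_i \phi(q_i)$ for every convex $\phi : \RR \to \RR$. The equal-sum requirement built into the definition is not an extra hypothesis here, since the convex family already enforces it: taking $\phi$ affine gives $\sum_i p_i \ge \sum_i q_i$, while taking $\phi(s) = (c - s)_+$ with $c$ exceeding every coordinate gives $\sum_i p_i \le \sum_i q_i$, so the two sums must agree. The same characterization holds verbatim on $\ZZ^n$, because an integer vector is in particular a real vector and majorization on $\ZZ^n$ is just the restriction of majorization on $\RR^n$. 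With this tool in hand, both statements reduce to a one-line cancellation.

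For part (1), write $u = (x_2, x_3, \ldots)$ and $v = (y_2, y_3, \ldots)$. For any convex $\phi$,
\[
\sum_i \phi\big((x_1, u)_i\big) - \sum_i \phi\big((x_1, v)_i\big) = \sum_{i \ge 2}\phi(x_i) - \sum_{i \ge 2}\phi(y_i),
\]
since the shared first coordinate contributes the identical term $\phi(x_1)$ to both sides and cancels. Thus the family of inequalities defining $(x_1, u) \succeq (x_1, v)$ is literally the same as the family defining $u \succeq v$, and applying the characterization in dimensions $n$ and $n-1$ yields the claimed equivalence in both directions simultaneously.

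For part (2) (the principle of transfer), all coordinates past the second cancel and the two vectors share the same sum, so by the characterization it suffices to verify, for every convex $\phi$,
\[
\phi(x_1 + t) + \phi(x_2) \ge \phi(x_1) + \phi(x_2 + t), \quad\text{i.e.,}\quad \phi(x_1 + t) - \phi(x_1) \ge \phi(x_2 + t) - \phi(x_2).
\]
This is exactly the statement that the first difference $s \mapsto \phi(s + t) - \phi(s)$ of a convex function is nondecreasing in $s$; since $x_1 > x_2$ and $t \ge 0$, the inequality holds. The case $t \le 0$ is the same computation with the interval endpoints interchanged (relabel $s = -t \ge 0$), which reverses the inequality and hence the majorization.

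The only genuine subtlety is the forward direction of part (1) — that majorization of the full vectors forces majorization of the reduced vectors. If one insists on arguing directly from the partial-sum definition (via the top-$k$ identity $\sum_{i=1}^k p_{(i)} = \max_{|S|=k}\sum_{i \in S} p_i$), this direction is awkward, because the common value $x_1$ may occupy different ranks in the two sorted vectors, so the partial sums of $(x_1,u)$ and $(x_1,v)$ need not line up with those of $u$ and $v$. The convex-function characterization sidesteps this entirely by cancelling $x_1$ before any sorting occurs, so I would prove (or cite) that characterization once and let both parts fall out. The reverse direction of part (1) and all of part (2) can alternatively be established elementarily from the top-$k$ identity if a self-contained development is preferred.
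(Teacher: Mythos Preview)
Your proof is correct and takes essentially the same route as the paper: for part~(1) the paper invokes the equivalent characterization $x \succeq y \iff \sum_j x_j = \sum_j y_j$ and $\sum_j (x_j - a)_+ \ge \sum_j (y_j - a)_+$ for all $a$ (i.e., your convex-function criterion restricted to the hinge functions $\phi(s)=(s-a)_+$), from which the common coordinate $x_1$ cancels exactly as you describe; for part~(2) the paper simply cites \citet{Marshall:2010hb}, whereas you supply the short convexity argument. The only cosmetic difference is that your Hardy--Littlewood--P\'olya formulation absorbs the equal-sum condition into the convex family, while the paper states it separately.
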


\begin{proof}\leavevmode
\begin{enumerate}
\item The property follows from an equivalent formulation of majorization listed in \citet{Marshall:2010hb}, where $x \succeq y$ if and only if
\[\sum_{j=1}^n x_n = \sum_{j=1}^n y_n \quad \text{ and } \quad \sum_{j=1}^n \left(x_j - a\right)_+ \ge \sum_{j=1}^n \left(y_j - a\right)_+ \text{ for all } a \in \RR.\]

\item Proved in \citet{Marshall:2010hb}. \qedhere
\end{enumerate}
\end{proof}

\begin{definition}
A function $g$ is Schur-concave if $x \succeq y$ implies $g\left(x\right) \le g\left(y\right)$.
\end{definition}

A Schur-concave function is symmetric by default since $a \succeq b$ and $b \succeq a$ if and only if $b$ is a permutation of the coordinates of $a$. Conversely a symmetric and log-concave function is Schur-concave \citep{Marshall:2010hb}. Interestingly, \citet{Gupta:1984fw} also show that, in the context of independent location families, Schur concavity of the probability density is equivalent to monotone likelihood ratio.

\subsection{Examples}

Many common exponential family models have Schur-concave carrier densities. Below we give a few examples:

\begin{example}[Independent binomial treatment outcomes in a clinical trial]
 If each of $n$ different treatments are applied to $m$ patients independently, the number of positive outcomes $X_j$ for treatment $j$ is $\text{Binomial}\left(m, p_j\right)$. The best treatment would be the treatment with the highest success probability $p_j$. The joint distribution of $X$ is given by
\[p\left(x\right) \propto \exp\left(\sum_j x_j \log\frac{p_j}{1-p_j}\right) \frac{1}{x_1! \left(m-x_1\right)! \cdots x_n! \left(m-x_n\right)!}\]
The carrier measure above is Schur-concave. The unadjusted pairwise test in this family is Fisher's exact test.
\end{example}

\begin{example}[Competitive sports under the Bradley--Terry model]
Suppose $n$ players compete in a round robin tournament, where player $j$ has ability $\theta_j$, and the probability of player $j$ winning against player $k$ is
\[\frac{e^{\theta_j - \theta_k}}{1 + e^{\theta_j - \theta_k}} = \frac{e^{\left(\theta_j - \theta_k\right) / 2}}{e^{\left(\theta_j - \theta_k\right) / 2} + e^{\left(\theta_k - \theta_j\right) / 2}}.\]

Let $Y_{jk}$ be an indicator for the match between player $j$ and $k$, where we take $Y_{jk} = 1$ if $j$ beats $k$ and $Y_{jk} = 0$ if $k$ beats $j$. For symmetry, we will also adopt the convention that $Y_{jk} + Y_{kj} = 1$. Thus the joint distribution of $Y=\left(Y_{jk}\right)_{j\neq k}$ is
\[p\left(y\right) \propto \exp\left(\sum_j 2\theta_j \sum_{k \ne j} y_{jk}\right)  = \exp\left(2\theta' x\right),\]
where $x_j = \sum_{k \ne j} y_{jk}$. In other words, if $X_j$ is the number of wins by player $j$, then $X=(X_1,\ldots,X_n)$ is a sufficient statistic with distribution
\[p\left(x\right) = \exp\left(2\theta' x\right) g\left(x\right),\]
where $g\left(x\right)$ is a function that counts the number of possible tournament results giving the net win vector $x$. A bijection proof shows that $x$ is indeed Schur-concave. Therefore, we can use Procedures 1--3 to compare player qualities.

After conditioning on $U(X) = (X_1 + X_2, X_3, \ldots, X_n)$, and under the assumption $\theta_1 = \theta_2$, every feasible configuration of $Y$ is equally likely. If $n$ is not too large (say, no more than $40$ players), we can find the conditional distribution of $X_1$ by enumerating over the configurations; for larger $n$, computation might pose a more serious problem, requiring us for example to compute the $p$-value using Markov Chain Monte Carlo techniques \citep{Besag:1989}.
\end{example}

\begin{example}[Comparing the variances of different normal populations] Suppose there are $n$ normal populations with laws $N(\mu_j, \sigma_j^2)$ and $m$ independent observations from each of them. The sample variance for population $j$ can be denoted as $R_j$. By Cochran's theorem, $\left(m-1\right) R_j \sim \sigma_j^2 \chi_{m-1}^2$, and thus the joint distribution of $R$ is
\begin{align*}
r & \sim \prod_{j=1}^n \left(\frac{\left(m-1\right) r_j}{\sigma_j^2}\right)^{\left(m-3\right) / 2} e^{-\left(m-1\right) r_j / 2 \sigma_j^2} 1_{\left\{r_j > 0 \right\}} \\
& \propto \exp\left(-\frac{m-1}{2 \sigma_1^2} r_1 - \cdots - \frac{m-1}{2 \sigma_n^2} r_n\right) \prod_{j=1}^n r_j^{\left(m-3\right) / 2} 1_{\left\{r > 0\right\}}.
\end{align*}
The carrier measure is $\prod_{j=1}^n r_j^{\left(m-3\right) / 2} 1_{\left\{r > 0\right\}}$, which is Schur-concave. Thus, we can use Procedures 1--3 to find populations with the smallest or largest variances. In this example, the distribution of $X_1 / (X_1 + X_2)$ conditional on $(X_1 + X_2, X_3, \ldots, X_4)$ is distributed as $\text{Beta}(m/2, m/2)$ under the null, or equivalently $X_1 / X_2$ is conditionally distributed as $F_{m, m}$; hence a (two-tailed) $F$-test is valid for comparing the top two populations.
\end{example}

\section{Verifying the Winner: Is the Winner Really the Best?}
\label{sec:winnerbest}

First, we justify the notion that the population with largest $\theta_j$ is also the largest population in stochastic order:
\begin{theorem}
\label{thm:stoch}
For a multivariate exponential family with a symmetric carrier distribution, $X_1 \ge X_2$ in stochastic order if and only if $\theta_1 \geq \theta_2$.
\end{theorem}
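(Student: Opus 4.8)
The plan is to reduce everything to a single symmetrization identity built from the transposition $\sigma$ that swaps the first two coordinates. Recall that stochastic dominance $X_1 \ge_{st} X_2$ is equivalent to $\EE[f(X_1)] \ge \EE[f(X_2)]$ for every bounded nondecreasing $f : \RR \to \RR$. Writing $p(x) = \exp(\theta'x - \psi(\theta)) g(x)$ for the density (with respect to Lebesgue or counting measure), I would compute the difference
\[
\EE[f(X_1)] - \EE[f(X_2)] = \int \left(f(x_1) - f(x_2)\right) p(x)\, dx,
\]
and split the domain into the two half-spaces $\{x_1 > x_2\}$ and $\{x_1 < x_2\}$ (the diagonal $\{x_1 = x_2\}$ contributes nothing).

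First I would fold the second region onto the first. Since $\sigma$ is measure-preserving (a coordinate permutation has unit Jacobian, and likewise preserves counting measure) and $g$ is permutation-symmetric, substituting $x \mapsto \sigma x$ on $\{x_1 < x_2\}$ and using $g(\sigma x) = g(x)$ turns that piece into an integral over $\{x_1 > x_2\}$ against $p(\sigma x)$. Combining the two pieces yields the key identity
\[
\EE[f(X_1)] - \EE[f(X_2)] = \int_{\{x_1 > x_2\}} \left(f(x_1) - f(x_2)\right)\left(p(x) - p(\sigma x)\right) dx.
\]
Here the only place $\theta$ enters the second factor is through the ratio $p(x)/p(\sigma x) = \exp\bigl((\theta_1 - \theta_2)(x_1 - x_2)\bigr)$, because the carrier terms cancel by symmetry and all coordinates other than the first two are untouched by $\sigma$.

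The ``if'' direction then falls out by inspection: on $\{x_1 > x_2\}$ the factor $f(x_1) - f(x_2) \ge 0$ since $f$ is nondecreasing, and when $\theta_1 \ge \theta_2$ the exponent $(\theta_1 - \theta_2)(x_1 - x_2) \ge 0$ forces $p(x) \ge p(\sigma x)$; hence the integrand is nonnegative and $X_1 \ge_{st} X_2$. For the ``only if'' direction I would run the same identity with the roles of coordinates $1$ and $2$ interchanged, which shows that $\theta_2 > \theta_1$ implies $X_2 \ge_{st} X_1$; taking $f = 1\{\,\cdot > t\,\}$ makes the integrand strictly negative on $\{x_2 \le t < x_1\}$ wherever $g(x) > 0$, so the dominance is strict for some threshold $t$ and $X_1 \ge_{st} X_2$ cannot also hold. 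Contraposing gives $X_1 \ge_{st} X_2 \Rightarrow \theta_1 \ge \theta_2$.

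The one genuinely delicate point --- and the main obstacle --- is the strictness needed for ``only if'': it requires that the support of $X$ put positive mass off the diagonal $\{x_1 = x_2\}$, i.e.\ that $X_1$ and $X_2$ are not forced to coincide, which holds for any nondegenerate family of the form considered; I would state this mild nondegeneracy explicitly. Everything else is routine, and the discrete case is identical with sums replacing integrals, since $\sigma$ preserves counting measure just as it preserves Lebesgue measure.
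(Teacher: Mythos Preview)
Your proof is correct and follows essentially the same symmetrization idea as the paper: both arguments hinge on the observation that the transposition $\sigma$ swapping coordinates $1$ and $2$ leaves $g$ invariant, so $p(x)/p(\sigma x) = \exp\bigl((\theta_1-\theta_2)(x_1-x_2)\bigr)$, and then compare mass on the two half-spaces $\{x_1 > x_2\}$ and $\{x_1 < x_2\}$. The paper simply specializes your identity to the indicator $f = 1_{\{\cdot \ge a\}}$ and works directly with survival probabilities; your treatment of the ``only if'' direction is in fact more careful than the paper's, which just appeals to swapping the roles of $\theta_1$ and $\theta_2$ without addressing strictness.
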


\begin{proof}
It suffices to prove the ``if'' part, as the ``only if'' part can be follows from swapping the role of $\theta_1$ and $\theta_2$. For any fixed $a$, and $x_1 \ge a$ and $x_2 < a$, we have $x_1 > x_2$ and
\[\exp\left(\theta_1 x_1 + \theta_2 x_2 + \cdots + \theta_n x_n - \psi\left(\theta\right)\right) g\left(x\right) \ge \exp\left(\theta_1 x_2 + \theta_2 x_1 + \cdots + \theta_n x_n - \psi\left(\theta\right)\right) g\left(x\right).\]
Integrating both sides over the region $\left\{x:\; x_1 \ge a, x_2 < a\right\}$ gives
\[\PP\left[X_1 \ge a, X_2 < a\right] \ge \PP\left[X_1 < a, X_2 \ge a\right].$$
Now adding $\PP\left[X_1 \ge a, X_2 \ge a\right]$ to both probabilities gives
$$\PP\left[X_1 \ge a\right] \ge \PP\left[X_2 \ge a\right],\]
meaning that $X_1$ is greater than $X_2$ in stochastic order.
\end{proof}

Before proving our main result for Procedure 1, we give the following lemmas, the first of which clarifies a key idea in the proof, and the second is needed for a sharper bound in \eqref{eq:marginal}.

\begin{lemma}[\citealp{Berger:1982hy}]
\label{lma:union}
If $p_j$ are valid $p$-values for testing null hypothesis $H_{0j}$, then $p_* = \max_j p_j$ is a valid $p$-value for the union null (i.e.\ disjunction null) hypothesis $H_0 = \bigcup_j H_{0j}$.
\end{lemma}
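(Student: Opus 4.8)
The plan is to reduce the statement to the definition of a valid $p$-value together with one elementary monotonicity observation. Recall that a $p$-value $p$ is valid for a hypothesis $H$ if $\PP(p \le \alpha) \le \alpha$ for every $\alpha \in [0,1]$ and every distribution satisfying $H$. The feature of the disjunction null that drives the proof is that a distribution $P$ satisfies $H_0 = \bigcup_j H_{0j}$ exactly when it satisfies at least one of the component nulls $H_{0j}$.

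First I would fix an arbitrary $\alpha \in [0,1]$ and an arbitrary distribution $P$ satisfying $H_0$. By the description of the disjunction null, there is some index $k$ (possibly depending on $P$) for which $P$ satisfies $H_{0k}$. Since $p_k$ is by hypothesis a valid $p$-value for $H_{0k}$, validity applied under this particular $P$ gives $\PP(p_k \le \alpha) \le \alpha$.

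Next I would exploit the defining inequality $p_* = \max_j p_j \ge p_k$, which holds pointwise. This gives the event containment $\{p_* \le \alpha\} \subseteq \{p_k \le \alpha\}$, and hence $\PP(p_* \le \alpha) \le \PP(p_k \le \alpha) \le \alpha$. Since $\alpha$ and $P \in H_0$ were arbitrary, this establishes that $p_*$ is a valid $p$-value for $H_0$.

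The only point that requires care — and it is conceptual rather than a genuine obstacle — is that the index $k$ of a true component null is unknown and may vary with $P$. Taking the maximum over all $j$ sidesteps this entirely: because $p_*$ simultaneously dominates every individual $p_j$, the bound goes through no matter which component null happens to be the true one, and no distributional assumptions beyond validity of the individual $p_j$ are needed. I therefore expect the entire argument to be short and essentially model-free.
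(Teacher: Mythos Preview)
Your proposal is correct and matches the paper's proof essentially line for line: fix a distribution in $H_0$, pick an index $k$ whose component null is true, use $p_* \ge p_k$ to get $\PP[p_* \le \alpha] \le \PP[p_k \le \alpha] \le \alpha$. The paper simply writes this in two lines with ``without loss of generality $k=1$'' in place of your more explicit treatment of the unknown index.
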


\begin{proof}
Under $H_0$, one of the $H_{0j}$ is true; without loss of generality assume it is $H_{01}$. Then,
\[\PP\left[p_* \le \alpha\right] \le \PP\left[p_1 \le \alpha\right] \le \alpha.\]
Therefore $p_*$ is a valid $p$-value for the union null hypothesis.
\end{proof}

\begin{lemma}
\label{lma:margsharp}
If $\theta_1 \ge \max_{j \ne 1} \theta_j$, then $\PP\left[1 \text{ wins}\right] \ge \frac{1}{n}$.
\end{lemma}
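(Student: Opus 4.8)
The plan is to reduce the claim to a single pairwise comparison. Concretely, I would first show that under the hypothesis $\theta_1 \ge \theta_j$ we have $\PP[1 \text{ wins}] \ge \PP[j \text{ wins}]$ for every $j \neq 1$. Granting this, the lemma is immediate: since ties are broken uniformly at random, the events $\{j \text{ wins}\}$ partition the sample space, so $\sum_{j=1}^n \PP[j \text{ wins}] = 1$. Combining this with $\PP[j \text{ wins}] \le \PP[1 \text{ wins}]$ for all $j$ (which holds because $\theta_1 \ge \max_{j \ne 1}\theta_j$) gives $1 \le n\,\PP[1 \text{ wins}]$, that is, $\PP[1 \text{ wins}] \ge \frac{1}{n}$.

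The core of the argument is therefore the pairwise inequality, which I would establish by a symmetry-swap. Fix $j \neq 1$ and let $\sigma$ be the permutation of coordinates that swaps entries $1$ and $j$. Writing $A_1 = \{1 \text{ wins}\}$ and $A_j = \{j \text{ wins}\}$, note that $A_j = \sigma(A_1)$. Let $\mu$ denote the (permutation-symmetric) base measure. A change of variables $x \mapsto \sigma x$ together with the symmetry $g(\sigma x) = g(x)$ lets me express both probabilities as integrals over the same region $A_1$, yielding
\[
\PP[1 \text{ wins}] - \PP[j \text{ wins}] = \int_{A_1} \left[ e^{\theta' x} - e^{\theta' \sigma x} \right] e^{-\psi(\theta)} g(x)\, d\mu(x).
\]
The two exponents differ only in coordinates $1$ and $j$, so $\theta' x - \theta' \sigma x = (\theta_1 - \theta_j)(x_1 - x_j)$. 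On $A_1$ we have $x_1 \ge x_j$, and by hypothesis $\theta_1 - \theta_j \ge 0$; hence the bracketed term is nonnegative throughout $A_1$, the integral is nonnegative, and the pairwise inequality follows.

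The main point requiring care is the treatment of ties, which is relevant only in the lattice case. Because ties are broken uniformly at random, the tie-breaking rule is itself invariant under $\sigma$, so the identity $A_j = \sigma(A_1)$ continues to hold after incorporating the randomization; moreover, on the diagonal $\{x_1 = x_j\}$ the integrand $e^{\theta' x} - e^{\theta' \sigma x}$ vanishes identically and contributes nothing. Thus no obstruction arises there, and the sign analysis above is unaffected. I expect the only slightly delicate bookkeeping to be stating the change of variables and the symmetric tie-breaking cleanly; all of the analytic content reduces to the sign of $(\theta_1 - \theta_j)(x_1 - x_j)$ on $A_1$. It is worth noting that this proof uses only the permutation symmetry of $g$ and not the full strength of Schur-concavity.
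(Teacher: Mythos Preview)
Your proof is correct, and it takes a genuinely different route from the paper's. The paper argues via a single $n$-to-$1$ map $\tau$ that swaps $x_1$ with the overall maximum of $x$: since $\theta_1$ is largest, $\theta'\tau(x) \ge \theta' x$ pointwise, and integrating both sides over all of $\RR^n$ (or $\ZZ^n$) turns the left-hand side into $n\,\PP[1 \text{ wins}]$ and the right-hand side into $1$. Your argument instead uses $n-1$ transpositions, one for each competitor $j$, to show the stronger pairwise statement $\PP[1 \text{ wins}] \ge \PP[j \text{ wins}]$, and then sums. Both proofs rest only on the permutation symmetry of $g$ (not Schur-concavity), as you observe. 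What your route buys is a cleaner change of variables (bijections rather than an $n$-to-$1$ map) and the explicit intermediate conclusion that population $1$ is in fact the single most likely winner; what the paper's route buys is brevity, collapsing the whole bound into one inequality. Your handling of ties is also sound: on $\{x_1 = x_j\}$ the transposition fixes $x$, so the bracketed integrand vanishes and the tie-breaking randomization never enters the inequality.
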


\begin{proof}
We can prove so with a coupling argument: for any sequence $x_1, x_2, \ldots, x_n$, define $\tau\left(x\right) = \left\{\tau\left(x_j\right)\right\}_{j = 1, \ldots, n}$, obtained by swapping $x_1$ with the largest value in the sequence $x$. Hence
\[\exp\left(\theta_1 \tau\left(x_1\right) + \cdots + \theta_n \tau\left(x_n\right) - \psi\left(\theta\right)\right) g\left(X\right) \ge \exp\left(\theta_1 x_1 + \cdots + \theta_n x_n - \psi\left(\theta\right)\right) g\left(X\right).\]
If we integrating both sides over $\mathbb{R}^n$ (or $\mathbb{Z}^n$ in the case of counting measure), the right hand side gives $1$. Since $\tau$ is an $n$-to-$1$ mapping, the left hand side is $n$ times the integral over $\left\{x_1 \ge \max_{j > 1} x_j\right\}$. In other words,
\[n \PP\left[1 \text{ wins}\right] \ge 1\]
as desired.

In the case of counting measure, the above argument follows if a subscript is attached to identical observations uniformly to ensure strict ordering.
\end{proof}

We are now ready to prove our result for Procedure 1, restated here for reference.

\begin{customthm}[Part 1 of \Cref{thm:main}]
Assume the model~\eqref{eq:expfam} holds and $g\left(x\right)$ is a Schur-concave function. Procedure 1 (the unadjusted pairwise test) has level $\alpha$ conditional on the best population not winning.
\end{customthm}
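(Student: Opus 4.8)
The plan is to recast the test through the conditional selective-inference lens and then deploy \Cref{lma:union} to dismantle the composite null. First I would fix the selection and condition on the sufficient statistic $(X_{[1]}+X_{[2]},\,X_{\setminus\{[1],[2]\}})$. Under the pairwise null $\theta_{[1]}=\theta_{[2]}$ the conditional law of $X_{[1]}$ has density proportional to $g(z,\,S-z,\,X_{\setminus\{[1],[2]\}})$ with $S=X_{[1]}+X_{[2]}$; symmetry of $g$ makes this law symmetric about $S/2$, and the principle of transfer (\Cref{lma:twoprop}) together with Schur-concavity makes it unimodal, decreasing away from $S/2$. Hence the two-tailed test is a genuine equal-tailed central region, and since the winner is by definition the larger of the pair, rejection is equivalent to $X_{[1]}$ exceeding the upper $1-\alpha/2$ threshold.

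The second step is the boundary computation. Conditioning additionally on the event that $[1]$ actually wins forces $X_{[1]}$ into the upper half of the symmetric null law. Because the runner-up is the \emph{second} largest observation, the maximum of the remaining coordinates, namely $X_{[3]}$, never exceeds $S/2$, so the winning constraint coincides with $\{X_{[1]}\ge S/2\}$; the symmetric law then places exactly $\alpha/2$ in the accessible upper tail out of an upper half of total mass $1/2$, giving conditional rejection probability exactly $\alpha$ when $\theta_{[1]}=\theta_{[2]}$. For $\theta_{[1]}<\theta_{[2]}$ the exponential tilt $\exp((\theta_{[1]}-\theta_{[2]})z)$ is non-increasing in $z$, so by a monotone-likelihood-ratio argument the conditional law is stochastically pushed down and the upper-tail probability only shrinks. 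This settles every configuration in the sub-null $\{\theta_{[1]}\le\theta_{[2]}\}$.

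The remaining, and hardest, configurations are those in $H$ with $\theta_{[1]}>\theta_{[2]}$, where the truly best population is neither the winner nor the runner-up but is hidden among the low-ranking coordinates with a small observed value. Here I would write $H=\bigcup_{k\ne[1]}\{\theta_{[1]}\le\theta_k\}$ and appeal to \Cref{lma:union}, so that it suffices to control the comparison against the \emph{best} index $b$. The feature that rescues validity is the low probability that the best population loses: I would couple each rejecting outcome in which $b$ fails to win with the outcome obtained by swapping the values of $b$ and the current winner. Since $g$ is symmetric this swap preserves the carrier and leaves the order statistics — hence the rejection decision — unchanged, while it can only raise the likelihood, because $b$ carries the larger parameter and, after the swap, the larger value; this is exactly the mechanism behind \Cref{thm:stoch} and \Cref{lma:margsharp}.

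The crux, and the step I expect to resist a quick argument, is precisely this last point. The difficulty is that one cannot localize to the tested pair: conditioning on all the non-top-two coordinates pins down the hidden best's value and thereby erases the very selection discount that makes the unadjusted test valid, so a purely conditional, pair-by-pair application of \Cref{lma:union} fails when $\theta_{[1]}>\theta_{[2]}$. The discount must instead be extracted globally through the majorization coupling above, and the delicate part is converting that stochastic domination into the sharp constant $\alpha$ so that it dovetails with the equal-tailed calibration of the second step. I expect Schur-concavity to enter here a second time, controlling how the carrier reweights the swapped configurations.
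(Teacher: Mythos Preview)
Your first two steps handle the sub-null $\theta_{[1]}\le\theta_{[2]}$ correctly and match the paper's treatment of $p_{12}$. The gap is in the third step, and it is a genuine missing idea rather than a technicality.

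You write that ``a purely conditional, pair-by-pair application of \Cref{lma:union} fails when $\theta_{[1]}>\theta_{[2]}$,'' because conditioning on $X_{\setminus\{[1],[2]\}}$ pins down the hidden best coordinate $X_b$. But this diagnosis is wrong, and it is exactly what leads you toward the unworkable global coupling. The paper \emph{does} proceed by a conditional, pair-by-pair application of \Cref{lma:union}; the point you are missing is that the conditioning set changes with the pair. For each $j\ne[1]$ --- including $j=b$ --- the paper builds a selective one-tailed $p$-value $p_{[1]j}$ by conditioning on $(M_{[1]j},\,X_{\setminus\{[1],j\}},\,A_{[1]})$. For $j=b$ this leaves $X_b$ free, so nothing is ``pinned down,'' and $p_{[1]b}$ is automatically valid for $H_{0[1]b}:\theta_{[1]}\le\theta_b$ by standard exponential-family conditioning. \Cref{lma:union} then gives that $p_{[1]*}=\max_{j\ne[1]} p_{[1]j}$ is valid for the union null $H$.

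The actual hard step is therefore not a coupling bound on the probability that $b$ loses, but a \emph{comparison of $p$-values}: one must show that $p_{[1][2]}\ge p_{[1]j}$ for every $j$, so that the unadjusted pairwise test (which computes only $p_{[1][2]}$) is in fact computing $p_{[1]*}$. This is where Schur-concavity does its work. After reducing to $n=3$ by \Cref{lma:twoprop}(1), each $p_{[1]j}$ is a ratio of line integrals of $g$ along rays in the simplex $\{x_1+x_2+x_3=\text{const}\}$; the principle of transfer (\Cref{lma:twoprop}(2)) gives a pointwise majorization between the integrands on corresponding rays, which after integration yields $p_{[1][2]}\ge p_{[1]j}$. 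Your proposal never constructs $p_{[1]j}$ for $j\ne[2]$ and so cannot reach this comparison; the swap argument you sketch (in the spirit of \Cref{lma:margsharp}) controls only the \emph{marginal} probability that $b$ loses and does not convert into the conditional level-$\alpha$ statement you need.
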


\begin{proof}
Let $j^*$ denote the (fixed) index of the best population, so $\theta_{j^*} \ge \max_{j\neq j^*} \theta_j$. The type I error --- the probability of incorrectly declaring any other $j$ to be the best --- is
\[\PP\left[\bigcup_{j \ne j^*} \text{declare } j \text{ best}\right] \le \sum_{j \ne j^*} \PP\left[\text{declare } j \text{ best} \;\middle|\; j \text{ wins}\right] \PP\left[j \text{ wins}\right],\]
recalling that ties are broken randomly, so there is only one winner in any realization. Thus, it is enough to bound $\PP_{\theta}\left[\text{declare } j \text{ best} \;\middle|\; j \text{ wins}\right] \le \alpha$, for each $j \ne j^*$, and for all $\theta$ with $j^* \in \argmax_j \theta_j$. Then we will have
\begin{equation}
\label{eq:marginal}
\PP\left[\bigcup_{j \ne j^*} \text{declare } j \text{ best}\right] \le \sum_{j \ne j^*} \alpha \cdot \PP\left[j \text{ wins}\right] = \alpha \PP\left[j^* \text{ does not win}\right] \le \frac{n-1}{n} \alpha,
\end{equation}
where the last inequality follows from \Cref{lma:margsharp}.

We start by assuming that we are working with the Lebesgue measure rather than the counting measure (eliminating the possibility of ties). The necessary modification of the proof for the counting measure case is provided at the end of this proof.

To minimize notational clutter, we consider only the case where the winner is $1$, i.e.\ $X_1 \ge \max_{j > 1} X_j$. Furthermore, we will denote the runner-up with $2$. This is not necessarily true, but we will use it as a shorthand to simplify our notation. For other cases, the following proof remains valid under relabeling and can thus be applied. In this case, we will test the null hypothesis $H_{01}:\; \theta_1 \le \max_{j > 1} \theta_j$, which is the union of the null hypotheses $H_{01j}:\; \theta_1 \le \theta_j$ for $j \ge 2$. For each of these we can construct an exact $p$-value $p_{1j}$, which is valid under $H_{01j}$ conditional on $A_1$, the event that $X_1$ is the winner. Hence by \Cref{lma:union}, a test that rejects when $p_{1*} = \max_j p_{1j} \le \alpha$ is valid for $H_{01}$ conditional on $A_1$. Procedure 1 performs an unadjusted pairwise test comparing $X_1$ to $X_2$. Hence it is sufficient to show that $p_{12} = p_{1*}$ and that rejecting when $p_{12}\le \alpha$ coincides with the unadjusted pairwise test.

Our proof has three main parts:
\begin{inlinelist}
\item deriving $p_{1j}$ for each $j\ge 2$,
\item showing that $p_{12} \ge p_{1j}$ for each $j\ge 2$, and
\item showing that $p_{12}$ is an unadjusted pairwise $p$-value.
\end{inlinelist}

\paragraph{Derivation of $p_{1j}$}

Following the framework in \citet{Fithian:2014ws}, we first construct the $p$-values by conditioning on the selection event where the winner is $1$: 
\[A_1 = \left\{X_1 \ge \max_{j > 1} X_j\right\}.\]
For convenience, we let
\[D_{jk} = \frac{X_j - X_k}{2} \quad \text{ and } \quad M_{jk} = \frac{X_j + X_k}{2}.\]

We then re-parametrize to replace $X_1$ and $X_j$ with $D_{1j}$ and $M_{1j}$. The distribution is now an exponential family with sufficient statistics $D_{1j}, M_{1j}, X_{\setminus\left\{1, j\right\}}$ and corresponding natural parameters $\theta_1 - \theta_j, \theta_1 + \theta_j, \theta_{\setminus\left\{1, j\right\}}$. We now consider
\begin{equation}
\label{eq:condlaw}
\mathcal{L}_{\theta_1 - \theta_j = 0} \left(D_{1j} \;\middle|\; M_{1j}, X_{\setminus\left\{1, j\right\}}, A_1\right).
\end{equation}
We can rewrite the selection event in terms of our new parameterization as
\begin{align*}
A_1 &= \left\{X_1 \ge X_j\right\} \cap \left\{X_1 \ge \max_{k \neq 1,j} X_k\right\}\\
&= \left\{D_{1j} \ge 0\right\} \cap \left\{D_{1j} \ge \max_{k \ne 1, j} X_k - M_{1j}\right\}.
\end{align*}
The conditional law of $D_{1j}$ in \eqref{eq:condlaw}, in particular, is a truncated distribution.
\begin{align*}
p\left(d_{1j} \mid M_{1j}, X_{\setminus\left\{1, j\right\}}, A_1\right) & \propto \exp\left(\left(\theta_1 - \theta_j\right) d_{1j} + \theta_2 X_2 + \cdots + \left(\theta_1 + \theta_j\right) M_{1j} + \cdots + \theta_n X_n \right) \\
& \quad\quad g\left(M_{1j} + d_{ij}, X_2, \ldots, M_{ij} - d_{ij}, \ldots X_n\right) 1_{A_1} \\
& \stackrel{\text{(a)}}{\propto} g\left(M_{1j} + d_{1j}, X_2, \ldots, M_{1j} - d_{1j}, \ldots X_n\right) 1_{A_1},
\end{align*}
where at step (a), conditioning on $X_{\setminus\left\{1, j\right\}}$ and $M_{1j}$ removes dependence on $\theta_{\setminus\left\{1, j\right\}}$ and $\theta_1 + \theta_j$ respectively, while $\theta_1 - \theta_j$ is taken to be $0$ under our null hypothesis. Note that we consider this as a one-dimensional distribution of $D_{1j}$ on $\RR$, where $M_{1j}$ and $X_{\setminus\left\{1, j\right\}}$ are treated as fixed.

The $p$-value for $H_{01j}$ is thus
\begin{equation}
\label{eq:p1j}
p_{1j} = \frac{\int_{D_{1j}}^\infty g\left(M_{1j} + z, X_2, \ldots, M_{1j} - z, \ldots, X_n\right) \,dz}{\int_{\max\left\{X_2 - M_{1j}, 0\right\}}^\infty g\left(M_{1j} + z, X_2, \ldots, M_{1j} - z, \ldots, X_n\right) \,dz}.
\end{equation}

Finally, by construction, $p_{1j}$ satisfies
\[\PP_{H_{01j}}\left[p_{1j} < \alpha \;\middle|\; M_{1j}, X_{\setminus\left\{1, j\right\}}, A_1\right] \le \alpha \quad \text{a.s.},\]
Marginalizing over $M_{1j}, X_{\setminus\left\{1, j\right\}}$,
\[\PP_{H_{01j}}\left[p_{1j} < \alpha \;\middle|\; A_1\right] \le \alpha.\]
Therefore these $p_{1j}$ are indeed valid $p$-values.

\paragraph{Demonstration that $p_{1*}=p_{12}$}

We now proceed to show that $p_{12}$, the $p$-value comparing the winner to the runner-up, is the largest of all $p_{1j}$. Without loss of generality, it is sufficient to show that $p_{12} \ge p_{13}$.

From the first part of this proof, both $p$-values are constructed by conditioning on $X_{\setminus\left\{1, 2, 3\right\}}$. Upon conditioning these, $\left(X_1, X_2, X_3\right)$ follows an exponential family distribution, with carrier distribution
\[g_{X_4, \ldots, X_n}\left(X_1, X_2, X_3\right) = g\left(X_1, \ldots, X_n\right),\]
here $X_4, \ldots, X_n$ are used in the subscript as they are conditioned on and no longer considered as variables. The first point in \Cref{lma:twoprop} says that the function $g_{X_4, \ldots, X_n}$ is Schur-concave as well. We have reduced the problem to the case when $n = 3$: we can apply the result for $n = 3$ to $g_{X_4, \ldots, X_n}$ to yield $p_{12} \ge p_{13}$ for $n > 3$.

We have reduced to the case when $n = 3$. The p-values thus are
\begin{align*}
p_{12} & = \frac{\int_{D_{12}}^\infty g\left(M_{12} + z, M_{12} - z, X_3\right) \,dz}{\int_0^\infty g\left(M_{12} + z, M_{12} - z, X_3\right) \,dz}, \\
p_{13} & = \frac{\int_{D_{13}}^\infty g\left(M_{13} + z, X_2, M_{13} - z\right) \,dz}{\int_{\max\left\{X_2 - M_{13}, 0\right\}}^\infty g\left(M_{13} + z, X_2, M_{13} - z\right) \,dz}
\end{align*}

The maximum in the denominator of $p_{13}$ prompts us to consider two separate cases. First, we suppose $X_2 < M_{13}$. Changing variables such that the lower limits of both integrals in the numerator are $0$, we can re-parametrize the integrals above to give
\begin{align*}
p_{12} & = \frac{\int_0^\infty g\left(X_1 + z, X_2 - z, X_3\right) \,dz}{\int_0^\infty g\left(M_{12} + z, M_{12} - z, X_3\right) \,dz} \\
& = \frac{\int_0^\infty g\left(X_1 + z, X_2 - z, X_3\right) \,dz}{\int_{-D_{12}}^\infty g\left(X_1 + z, X_2 - z, X_3\right) \,dz}, \\
p_{13} & = \frac{\int_0^\infty g\left(X_1 + z, X_2, X_3 - z\right) \,dz}{\int_0^\infty g\left(M_{13} + z, X_2, M_{13} - z\right) \,dz} \\
& = \frac{\int_0^\infty g\left(X_1 + z, X_2, X_3 - z\right) \,dz}{\int_{-D_{13}}^\infty g\left(X_1 + z, X_2, X_3 - z\right) \,dz}.
\end{align*}
To help see the re-parametrization, each of these integrals can be thought of in terms of integrals along segments and rays. For example $p_{12}$ can be represented in terms of integrals $A$ and $B$ in \Cref{fig:p-value}. Specifically,
\[p_{12} = \frac{B}{A+B}\]

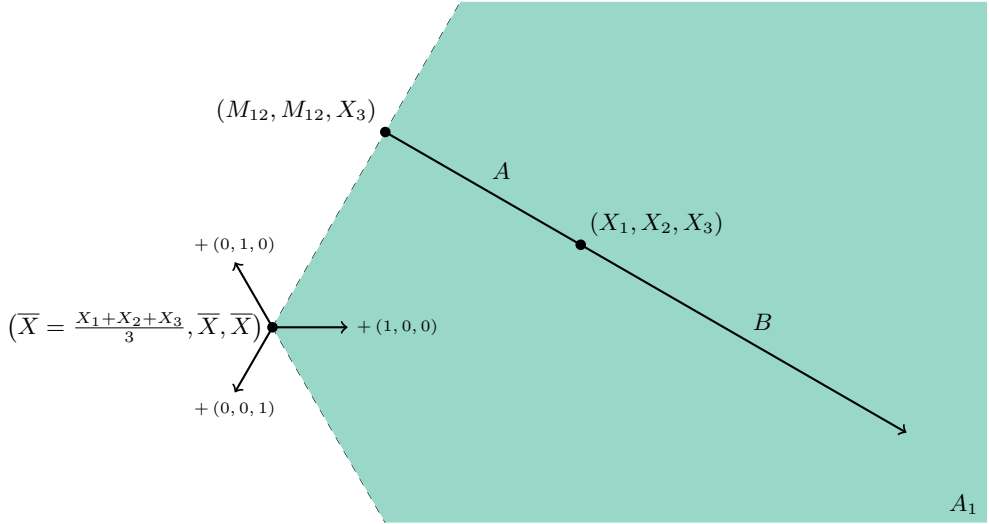
\begin{figure}[htbp]
\centering
\begin{tikzpicture}
	\coordinate (C1) at (60:5);
	\coordinate (C2) at (300:3);
	\coordinate (A) at (60:3);
	\coordinate (B) at ($(A) + (330:3)$);
	\draw[dashed] (0, 0) -- (C1);
	\draw[dashed] (0, 0) -- (C2);
	\fill[mid_green] (0, 0) -- (C1) -- ($(C1) + (7, 0)$) -- ($(C2) + (8, 0)$) -- (C2) -- cycle;
	\node[above left] at ($(C2) + (8, 0)$) {$A_1$};
	\draw[thick] (A) -- (B) node[midway, above right] {$A$};
	\draw[thick, ->] (B) -- ++(330:5) node[midway, above right] {$B$};
	\fill (A) circle(2pt) node[above left] {$\left(M_{12}, M_{12}, X_3\right)$};
	\fill (B) circle(2pt) node[above right] {$\left(X_1, X_2, X_3\right)$};
	\fill (0, 0) circle(2pt) node[left] {$\left(\overline{X} = \frac{X_1 + X_2 + X_3}{3}, \overline{X}, \overline{X}\right)$};
	\draw[thick, ->] (0, 0) -- (1, 0) node[right] {\tiny $+ \left(1, 0, 0\right)$};
	\draw[thick, ->] (0, 0) -- (120:1) node[above] {\tiny $+ \left(0, 1, 0\right)$};
	\draw[thick, ->] (0, 0) -- (240:1) node[below] {\tiny $+ \left(0, 0, 1\right)$};
\end{tikzpicture}
\caption{The $p$-value $p_{12}$ can be written in terms of integral $A$ along the segment and $B$ along the ray. The diagram is drawn a level set of $x_1 + x_2 + x_3$. The green region represents the selection event $A_1$.}
\label{fig:p-value}
\end{figure}

\Cref{fig:comparerays} has both the $p$-values shown on the same diagram. Proving $p_{12} \ge p_{13}$ is the same as proving
\[\frac{B}{A+B} \ge \frac{D}{C+D} \quad \Longleftrightarrow \quad \frac{B}{A} \ge \frac{D}{C}.\]
We will prove so by extending $A$ to include $\tilde{A}$ on the diagram. We denote the sum $A + \tilde{A}$ as $A'$. Formally,
\begin{equation}
\label{eq:int_extension}
A' = \int_{-D_{13}}^0 g\left(X_1 + z, X_2 - z, X_3\right) \,dz \ge \int_{-D_{12}}^0 g\left(X_1 + z, X_2 - z, X_3\right) \,dz = A.
\end{equation}
It is thus sufficient to show that $B \ge D$ and $C \ge A'$.

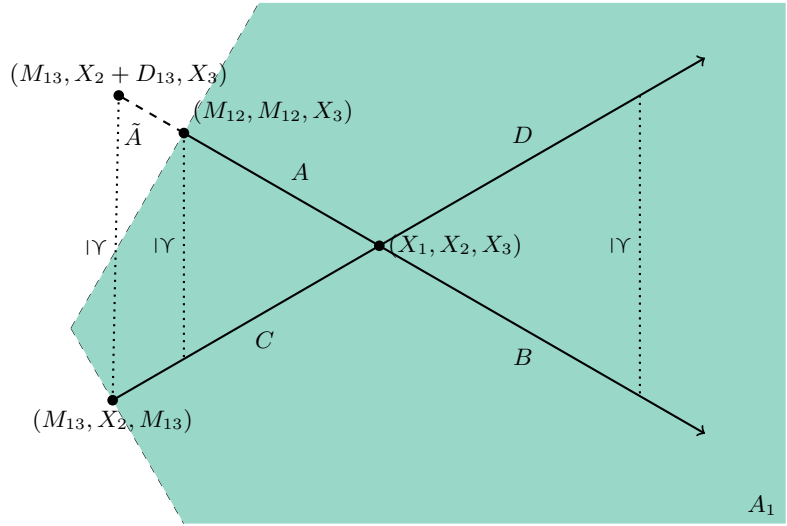
\begin{figure}[htbp]
\centering
\begin{tikzpicture}
	\coordinate (C1) at (60:5);
	\coordinate (C2) at (300:3);
	\coordinate (A) at (60:3);
	\coordinate (B) at ($(A) + (330:3)$);
	\coordinate (C) at ($(0, 0)!(B)!(C2)$);
	\coordinate (D) at ($(A) + (150:1)$);
	\draw[dashed] (0, 0) -- (C1);
	\draw[dashed] (0, 0) -- (C2);
	\fill[mid_green] (0, 0) -- (C1) -- ($(C1) + (7, 0)$) -- ($(C2) + (8, 0)$) -- (C2) -- cycle;
	\node[above left] at ($(C2) + (8, 0)$) {$A_1$};
	\draw[thick] (A) -- (B) node[midway, above right] {$A$};
	\draw[thick, ->] (B) -- ++(330:5) node[midway, below left] {$B$};
	\fill (A) circle(2pt) node[above right] {$\left(M_{12}, M_{12}, X_3\right)$};
	\fill (B) circle(2pt) node[right] {$\left(X_1, X_2, X_3\right)$};
	\draw[thick] (C) -- (B) node[midway, below right] {$C$};
	\node[below] at (C) {$\left(M_{13}, X_2, M_{13}\right)$};
	\draw[thick, ->] (B) -- ++(30:5) node[midway, above left] {$D$};
	\fill ($(0, 0)!(B)!(C2)$) circle(2pt);
	\draw[thick, dashed] (A) -- (D) node[midway, below left] {$\tilde{A}$};
	\fill (D) circle(2pt) node[above] {$\left(M_{13}, X_2 + D_{13}, X_3\right)$};
	\draw[thick, dotted] (C) -- (D) node[midway, below, rotate=-90] {$\succeq$};
	\draw[thick, dotted] (A) -- ++(0, -3) node[midway, below, rotate=-90] {$\succeq$};
	\draw[thick, dotted] ($(B) + (30:4)$) -- ($(B) + (330:4)$) node[midway, below, rotate=-90] {$\succeq$};
\end{tikzpicture}
\caption{The $p$-value $p_{12}$ can be written in terms of integral $A$ along the segment and $B$ along the ray; and $p_{13}$ in terms of $C$ and $D$. $A'$ would refer to the sum of $A$ with the dashed line portion labeled as $\tilde{A}$, formally explained in \Cref{eq:int_extension}. The majorization relation is indicated by the dotted line.}
\label{fig:comparerays}
\end{figure}

Indeed from the second point in \Cref{lma:twoprop} we have
\[\left(X_1 + z, X_2 - z, X_3\right) \succeq \left(X_1 + z, X_2, X_3 - z\right)\]
for $z \le 0$ and the majorization reversed for $z \ge 0$. This majorization relation is indicated as the dotted line in \Cref{fig:comparerays}. So Schur-concavity shows that
\[g\left(X_1 + z, X_2 - z, X_3\right) \le g\left(X_1 + z, X_2, X_3 - z\right)\]
for $z \le 0$, and the inequality reversed for $z \ge 0$. Taking integrals on both sides yields the desired inequality.

For the second case where $X_2 \ge M_{13}$, the segment $C$ will reach the line $x_1 = x_2$ first before it reaches $x_1 = x_3$, ending at $\left(X_2, X_2, X_1 - X_2 + X_3\right)$ instead. But we can still extend $A$ by $\tilde{A}$ to $\left(X_2, X_1, X_3\right)$. The rest of the proof follows. In either cases, $p_{12} \ge p_{13}$, or in generality, $p_{12} \ge p_{1j}$ for $j > 1$. In other words, $p_{12} = p_{1*}$.

\paragraph{$p_{12}$ is an unadjusted pairwise $p$-value}

Before conditioning on $A_1$, the distribution in \eqref{eq:condlaw} is symmetric around $0$ under $\theta_1=\theta_j$. Since the denominator of $p_{12}$ integrates over half of this symmetric distribution, it is always equal to $1/2$. Thus, the one-sided conditional test at level $\alpha$ is equivalent to the one-sided unadjusted test at level $\alpha/2$, or equivalently the two-sided unadjusted pairwise test at level $\alpha$.

\paragraph{Modification for counting measure}

Now suppose the exponential family is defined on the counting measure instead. If ties are broken independently and randomly, the end points on the rays can be considered as ``half an atom'' if the coordinates are integers (or a smaller fraction of an atom in case of a multi-way tie). The number of atoms on each ray is the same (after the extension $\tilde{A}$) and the atoms on each ray can be paired up in exactly the same way as illustrated in \Cref{fig:comparerays}, with the inequalities above still holding for each pair of the atoms. Summing these inequalities yields our desired result.
\end{proof}

\subsection{Power Comparison in the Multinomial Case}
\label{sec:subsetsel}

As the construction of this test follows \citet{Fithian:2014ws}, it uses UMPU selective level-$\alpha$ tests for the pairwise $p$-values. This section compares the power of our procedure to the best previously known method for verifying multinomial ranks, by \citet{Gupta:1967wg}. They devise a rule to select a subset that includes the maximum $\pi_j$. In other words, if the selected subset is $J\left(X\right)$, it guarantees
\begin{equation}
\label{eq:subsetsel}
\PP\left[\argmax_j \pi_j \in J\left(X\right)\right] \ge 1 - \alpha.
\end{equation}
This is achieved by finding an integer $d$, as a function on $m$, $n$ and $\alpha$, and selecting the subset
\[J\left(X\right) = \left\{j: X_j \ge \max_k X_k - d\right\}.\]
We take $d(m, n, \alpha)$ to be the smallest integer such that \eqref{eq:subsetsel} holds for any $\pi$; \citet{Gupta:1967wg} provide an algorithm for determining $d$.

Subset selection is closely related to testing whether the winner is the best. In particular, we can define a test that declares $j$ the best whenever $J\left(X\right) = \left\{j\right\}$. If $J\left(X\right)$ satisfies \eqref{eq:subsetsel}, this test is valid at level $\alpha$. We next compare the power of the resulting test against the power of our Procedure 1 in a multinomial example with $\pi \propto \left(e^\delta, 1, \ldots, 1\right)$, for several combinations of $m$ and $n$.

\Cref{fig:power} gives the power curves for $\text{Multinomial}\left(m, \pi\right)$ and
\[\pi \propto \left(e^\delta, 1, \ldots, 1\right),\]
for various combinations of $m$ and $n$. For their method, we use $\alpha = 0.05$; but in light of the extra factor of $\frac{n-1}{n}$ in \eqref{eq:marginal}, we will apply the selective procedure with $\frac{n}{n-1} \alpha$ such that the marginal type I error rate of both procedures are controlled at $\alpha$. Their test coincides with our test at $n = 2$; however as $n$ grows, the selective test shows significantly more power than \citeauthor{Gupta:1967wg}'s test.

\begin{figure}[htbp]
\centering
\includegraphics[width = \textwidth]{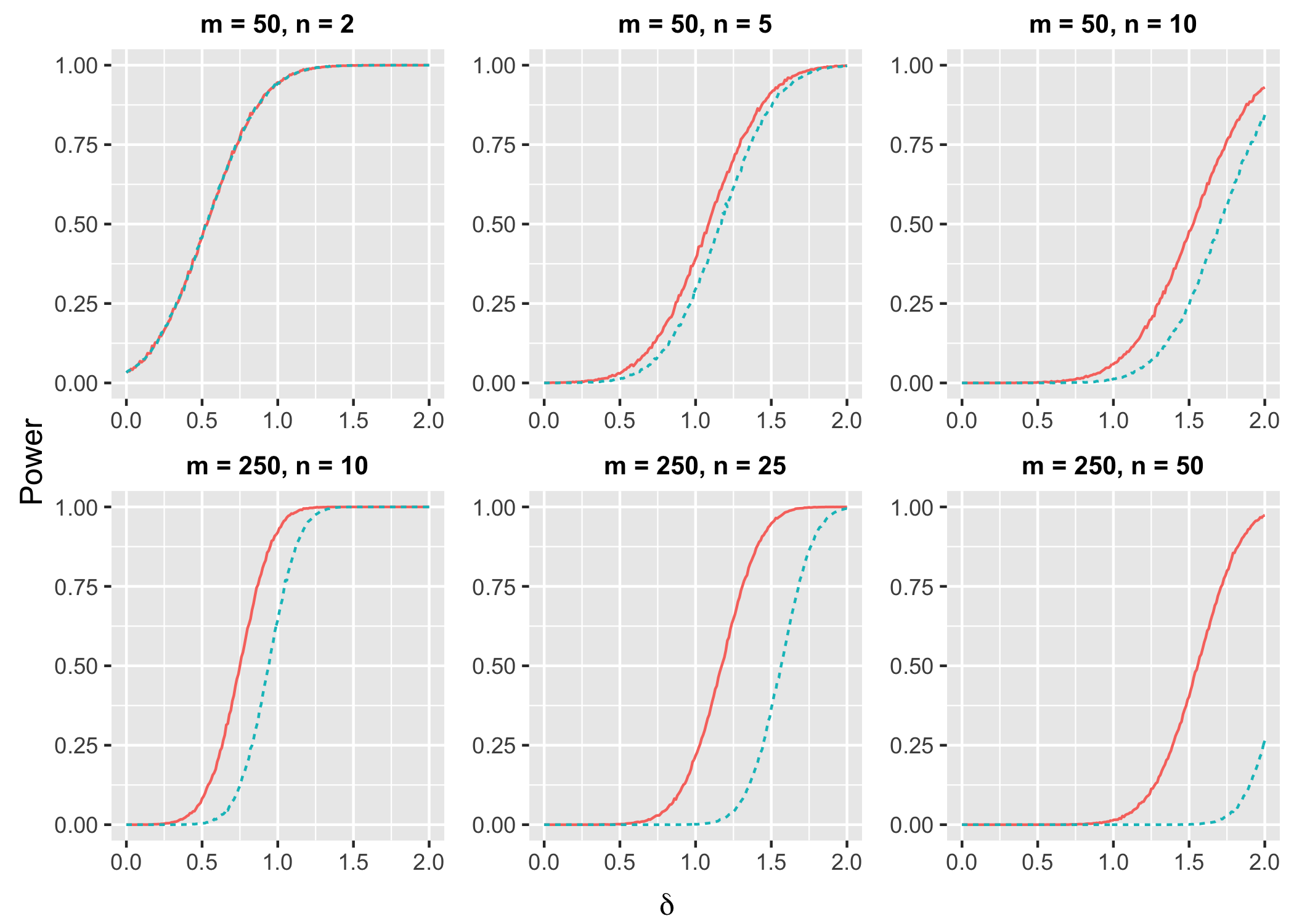}
\caption{Power curves as a function of $\delta$. The plots in the first row all have $m = 50$ and the second row $m = 250$. The solid line and the dashed line are the power for the selective test and Gupta and Nagel's test, respectively.}
\label{fig:power}
\end{figure}

To interpret, e.g., the upper right panel of \Cref{fig:power}, suppose that in a poll of $m = 50$ respondents, one candidate enjoys $30\%$ support and the other $n - 1 = 9$ split the remainder ($\delta = \log\frac{0.3}{0.7 / 9} \approx 1.35$). Then our procedure has power approximately $0.3$ to detect the best candidate, while \citeauthor{Gupta:1967wg}'s procedure has power around $0.1$.

To understand why our method is more powerful, note that both procedures operate by comparing $X_{[1]}-X_{[2]}$ to some threshold, but the two methods differ in how that threshold is determined. The threshold from \citet{Gupta:1967wg} is fixed and depends on $m$ and $n$ alone, whereas in our procedure the threshold depends on $X_{[1]}+X_{[2]}$, a data-adaptive choice. 

The difference between the two methods is amplified when $n$ is large and $\pi_{(1)} \ll 1/2$. In that case, $d$ from \citeauthor{Gupta:1967wg} is usually computed based on the worst-case scenario $\pi = \left(\frac{1}{2}, \frac{1}{2}, 0, \ldots, 0\right)$; i.e.\ $d$ is the upper $\alpha$ quantile of
\[X_1 - X_2 \sim m - 2 \cdot \text{Binomial}\left(m, \frac{1}{2}\right) \approx \text{Normal}\left(0, m\right).\]
Thus $d \approx \sqrt{m} z_\alpha$, where $z_\alpha$ is the upper $\alpha$ quantile of a standard Gaussian. On the other hand, our method defines a threshold based on the upper $\frac{n}{n-1} \cdot \frac{\alpha}{2}$ quantile of
\[X_1 - X_2 \mid X_1 + X_2 \sim X_1 + X_2 - 2 \cdot \text{Binomial}\left(X_1 + X_2, \frac{1}{2}\right),\]
which is approximately $\sqrt{X_1 + X_2} z_{\alpha / 2}$. If $\pi_{(1)} \ll 1/2$ then with high probability $X_1 + X_2 \ll  m$, making our test much more liberal.

\section{Confidence Bounds on Differences: By How Much?}
\label{sec:confbound}

By generalizing the above, we can construct a lower confidence bound for $\theta_{[1]} - \max_{j \ne [1]} \theta_{j}$. Here we provide a more powerful Procedure 2' first. We will proceed by inverting a statistical test of the hypothesis $H_{0[1]}^\delta:\; \theta_{[1]} - \max_{j \ne [1]} \theta_{j} \le \delta$, which can be written as a union of null hypotheses:
\[H_{0[1]}^\delta = \bigcup_{j \ne [1]} H_{0[1]j}: \theta_{[1]} - \theta_j \le \delta.\]
By \Cref{lma:union}, we can construct selective exact one-tailed $p$-values $p_{[1]j}^\delta$ for each of these by conditioning on $A_{[1]}$, $M_{[1]j}$ and $X_{\setminus\left\{[1], j\right\}}$, giving us an exact test for $H_{0[1]}$ by rejecting whenever $\max_{j \ne [1]} p_{[1]j}^\delta < \alpha$.

\begin{theorem}
The $p$-values constructed above satisfy $p_{[1][2]}^\delta \ge p_{[1]j}^\delta$ for any $j \ne [1]$.
\end{theorem}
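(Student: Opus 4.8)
The plan is to follow the same three-part strategy used to establish $p_{12} \ge p_{1j}$ in the proof of Part 1 of \Cref{thm:main}, observing that shifting the null boundary from $\theta_{[1]} = \theta_j$ to $\theta_{[1]} - \theta_j = \delta$ only inserts an exponential tilt into the integrands. Concretely, evaluating the conditional law \eqref{eq:condlaw} at the boundary $\theta_{[1]} - \theta_j = \delta$ replaces the integrand $g(\cdots)$ by $e^{\delta d_{[1]j}} g(\cdots)$ in both the numerator and denominator of the analog of \eqref{eq:p1j}. After the change of variables that shifts the lower limit of the numerator to $0$, the constant $e^{\delta D_{[1]j}}$ factors out and cancels, leaving, in the relabelled case where the winner is $1$ and the runner-up is $2$,
\begin{align*}
p_{12}^\delta &= \frac{\int_0^\infty e^{\delta w} g(X_1 + w, X_2 - w, X_3)\,dw}{\int_{-D_{12}}^\infty e^{\delta w} g(X_1 + w, X_2 - w, X_3)\,dw}, \\
p_{13}^\delta &= \frac{\int_0^\infty e^{\delta w} g(X_1 + w, X_2, X_3 - w)\,dw}{\int_{-D_{13}}^\infty e^{\delta w} g(X_1 + w, X_2, X_3 - w)\,dw},
\end{align*}
and similarly for general $j$.

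First I would reduce to $n = 3$ exactly as before: conditioning on $X_{\setminus\{1,2,3\}}$ leaves a Schur-concave carrier by the first part of \Cref{lma:twoprop}, and this conditioning does not touch the tilt factor $e^{\delta w}$, which depends only on $D_{[1]j}$. It then suffices, without loss of generality, to prove $p_{12}^\delta \ge p_{13}^\delta$. Writing $p_{12}^\delta = B/(A+B)$ and $p_{13}^\delta = D/(C+D)$, with $A, B, C, D$ now denoting the tilted segment- and ray-integrals of \Cref{fig:comparerays}, the claim reduces to $B/A \ge D/C$; as before I would extend $A$ to $A'$ over the interval $[-D_{13}, 0]$, so that it remains enough to show $B \ge D$ and $C \ge A'$.

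The only new point, and the crux, is that the exponential tilt does not disturb any of the pointwise inequalities driving the argument. The principle of transfer (second part of \Cref{lma:twoprop}) together with Schur-concavity still gives $g(X_1 + w, X_2 - w, X_3) \ge g(X_1 + w, X_2, X_3 - w)$ for $w \ge 0$ and the reverse for $w \le 0$. At each comparison the factor $e^{\delta w}$ appears identically in the two integrands at the same value of $w$, and since $e^{\delta w} > 0$ for every real $\delta$, multiplying through preserves the inequality: integrating over $w \ge 0$ gives $B \ge D$, and integrating over $w \le 0$ gives $C \ge A'$, while $A' \ge A$ follows from $-D_{13} \le -D_{12}$ because a nonnegative integrand is integrated over a larger interval. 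Combining $B \ge D$ and $C \ge A'$ with $A' \ge A$ yields $B/A \ge B/A' \ge D/C$.

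I expect the remaining work to be routine bookkeeping rather than a genuine obstacle: the split into the cases $X_2 < M_{13}$ and $X_2 \ge M_{13}$ for the lower limit of the denominator of $p_{13}^\delta$ is handled verbatim as in the $\delta = 0$ proof, and the counting-measure modification (pairing atoms along the two rays) carries over unchanged because the tilt weights paired atoms by the same $e^{\delta w}$. The one thing to verify carefully is that the tilt exponent is exactly $\delta w$, rather than $\delta$ times a rescaled deviation, in both $p_{12}^\delta$ and $p_{13}^\delta$; this holds because the natural parameter conjugate to $D_{[1]j}$ is precisely $\theta_{[1]} - \theta_j$, so the same $\delta$ multiplies the deviation in each $D_{[1]j}$.
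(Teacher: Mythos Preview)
Your proposal is correct and follows essentially the same approach as the paper's own proof: derive the tilted conditional law, reduce to $n=3$ via \Cref{lma:twoprop}, change variables so the tilt becomes $e^{\delta z}$ in both $p_{12}^\delta$ and $p_{13}^\delta$, and then invoke the same $A,B,C,D$ ray-integral comparison from \Cref{fig:comparerays}. If anything, you give more detail than the paper does---the paper simply states ``The same argument in \Cref{fig:comparerays} shows that $p_{12}^\delta \ge p_{13}^\delta$''---whereas you correctly make explicit that the common positive factor $e^{\delta w}$ at matching values of $w$ preserves each pointwise inequality.
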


\begin{proof}
Again we start with assuming $X_1 \ge X_2 \ge \max_{j > 2} X_j$ for convenience. The $p$-values in question are derived from the conditional law
\[\mathcal{L}_{\theta_1 - \theta_j = \delta} \left(D_{1j} \;\middle|\; M_{1j}, X_2, \ldots, X_n, A\right),\]
which is the truncated distribution
\begin{align*}
p\left(d_{1j}\right) & \propto \exp\left(\left(\theta_1 - \theta_j\right) d_{1j} + \theta_2 X_2 + \cdots + \left(\theta_1 + \theta_j\right) M_{1j} + \cdots + \theta_n X_n \right) \\
& \quad\quad g\left(M_{1j} + d_{1j}, X_2, \ldots, M_{1j} - d_{1j}, \ldots X_n\right) 1_{A_1} \\
& \propto \exp\left(\delta d_{1j}\right) g\left(M_{1j} + d_{1j}, X_2, \ldots, M_{1j} - d_{1j}, \ldots X_n\right) 1_{A_1}.
\end{align*}

The $p$-values thus are
\[p_{1j}^\delta = \frac{\int_{D_{1j}}^\infty \exp\left(\delta z\right) g\left(M_{1j} + z, X_2, \ldots, M_{1j} - z, \ldots, X_n\right) \,dz}{\int_{\max\left\{X_2 - M_{1j}, 0\right\}}^\infty \exp\left(\delta z\right) g\left(M_{1j} + z, X_2, \ldots, M_{1j} - z, \ldots, X_n\right) \,dz}.\]

As before in Part 1 of \Cref{thm:main}, the conditioning reduces to the case where $n = 3$. Once again it is sufficient to show that $p_{12} \ge p_{13}$. We have the same two cases. If $X_2 < M_{13}$, then
\begin{align*}
p_{12}^\delta & = \frac{\int_0^\infty \exp\left(\delta \left(z + D_{12}\right)\right) g\left(X_1 + z, X_2 - z, X_3\right) \,dz}{\int_{-D_{12}}^\infty \exp\left(\delta \left(z + D_{12}\right)\right) g\left(X_1 + z, X_2 - z, X_3\right) \,dz} \\
& = \frac{\int_0^\infty \exp\left(\delta z\right) g\left(X_1 + z, X_2 - z, X_3\right) \,dz}{\int_{-D_{12}}^\infty \exp\left(\delta z\right) g\left(X_1 + z, X_2 - z, X_3\right) \,dz} \\
p_{13}^\delta & = \frac{\int_0^\infty \exp\left(\delta \left(z + D_{13}\right)\right) g\left(X_1 + z, X_2, X_3 - z\right) \,dz}{\int_{-D_{13}}^\infty \exp\left(\delta \left(z + D_{13}\right)\right) g\left(X_1 + z, X_2, X_3 - z\right) \,dz} \\
& = \frac{\int_0^\infty \exp\left(\delta z\right) g\left(X_1 + z, X_2, X_3 - z\right) \,dz}{\int_{-D_{13}}^\infty \exp\left(\delta z\right) g\left(X_1 + z, X_2, X_3 - z\right) \,dz}.
\end{align*}

The same argument in \Cref{fig:comparerays} shows that $p_{12}^\delta \ge p_{13}^\delta$. This is again true for the case where $X_2 \ge M_{13}$ as well.
\end{proof}

In other words, Procedure 2' can be summarized as: Find the minimum $\delta$ such that $p_{[1][2]}^\delta \le \alpha$. And by construction, Procedure 2' gives exact $1-\alpha$ confidence bound for $\theta_{[1]} - \max_{j \ne [1]} \theta_j$.

\begin{customthm}[Part 2 of \Cref{thm:main}]
Assume the model~\eqref{eq:expfam} holds and $g\left(x\right)$ is a Schur-concave function. Procedure 2 (the lower bound of unadjusted pairwise confidence interval) gives a conservative $1-\alpha$ lower confidence bound for $\theta_{[1]} - \max_{j \ne [1]} \theta_j$.
\end{customthm}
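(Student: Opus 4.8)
The plan is to show that the (simple) Procedure~2 bound is always no larger than the exact bound produced by Procedure~2$'$, which was just shown to be an exact $1-\alpha$ lower confidence bound for $\theta_{[1]} - \max_{j \ne [1]} \theta_j$; conservative coverage then follows at once. As in the preceding proofs I would condition on $M_{[1][2]}$ and $X_{\setminus\{[1],[2]\}}$ and work with the one-dimensional function $h(z) = g(M_{[1][2]} + z, \ldots, M_{[1][2]} - z, \ldots)$, which is \emph{even} in $z$ because $g$ is permutation-symmetric. Since $[2]$ is the runner-up, the selection event $A_{[1]}$ reduces after conditioning to $\{D_{[1][2]} \ge 0\}$ (the remaining constraints $X_{[1]} \ge X_k$ are non-binding). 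Writing $D = D_{[1][2]} \ge 0$, the exact selective one-tailed $p$-value and the unadjusted one-tailed $p$-value are $p^\delta = \frac{\int_D^\infty e^{\delta z} h(z)\,dz}{\int_0^\infty e^{\delta z} h(z)\,dz}$ and $\tilde p^\delta = \frac{\int_D^\infty e^{\delta z} h(z)\,dz}{\int_{-\infty}^\infty e^{\delta z} h(z)\,dz}$, differing only in whether the denominator is truncated to the selection region.

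Next I would pin down the two bounds in terms of these $p$-values. Both are nondecreasing in $\delta$ (the tilted family $e^{\delta z} h(z)$ has monotone likelihood ratio in $z$, so the upper-tail probabilities increase in $\delta$), which I would simply cite as standard. The exact bound $L_{\mathrm{exact}}$ from Procedure~2$'$ solves $p^{L_{\mathrm{exact}}} = \alpha$, while the lower endpoint $L_{\mathrm{unadj}}$ of the equal-tailed two-sided unadjusted pairwise interval solves $\tilde p^{L_{\mathrm{unadj}}} = \alpha/2$. Procedure~2 reports $L_{\mathrm{unadj}}$ when $L_{\mathrm{unadj}} \ge 0$ and $-\infty$ otherwise, so it suffices to prove that the reported value never exceeds $L_{\mathrm{exact}}$ pointwise.

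The crux is a one-sided factor-of-two inequality generalizing the exact identity $p^0 = 2\tilde p^0$ used in Part~1: I claim $p^\delta \le 2\,\tilde p^\delta$ for every $\delta \ge 0$. Since the two $p$-values share a numerator, this is equivalent to $\int_{-\infty}^\infty e^{\delta z} h(z)\,dz \le 2\int_0^\infty e^{\delta z} h(z)\,dz$, i.e.\ $\int_{-\infty}^0 e^{\delta z} h(z)\,dz \le \int_0^\infty e^{\delta z} h(z)\,dz$. Substituting $z \mapsto -z$ and using that $h$ is even rewrites the left side as $\int_0^\infty e^{-\delta z} h(z)\,dz$, so the claim reduces to $\int_0^\infty (e^{\delta z} - e^{-\delta z}) h(z)\,dz \ge 0$, which holds because $h \ge 0$ and $e^{\delta z} - e^{-\delta z} \ge 0$ for $z \ge 0$ exactly when $\delta \ge 0$. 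This is the only place the evenness of $h$ (hence symmetry of $g$) and the sign of $\delta$ enter, and I expect it to be the main obstacle — chiefly in recognizing that the bound is needed only at a single nonnegative $\delta$, not uniformly.

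Finally I would assemble the pieces. If $L_{\mathrm{unadj}} < 0$ the reported bound is $-\infty \le L_{\mathrm{exact}}$ trivially. If $L_{\mathrm{unadj}} \ge 0$, applying the key inequality at $\delta = L_{\mathrm{unadj}}$ gives $p^{L_{\mathrm{unadj}}} \le 2\tilde p^{L_{\mathrm{unadj}}} = \alpha = p^{L_{\mathrm{exact}}}$, and monotonicity of $p^\delta$ yields $L_{\mathrm{unadj}} \le L_{\mathrm{exact}}$. In either case the reported bound is at most $L_{\mathrm{exact}}$; since $L_{\mathrm{exact}}$ covers $\theta_{[1]} - \max_{j \ne [1]} \theta_j$ with probability exactly $1-\alpha$, the reported bound covers it with probability at least $1-\alpha$, i.e.\ it is conservative. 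The counting-measure case is handled as in Part~1, replacing integrals by lattice sums and pairing atoms symmetrically about $z = 0$ with $e^{\delta z} \ge e^{-\delta z}$ applied termwise; the randomization assumed throughout makes $L_{\mathrm{unadj}}$ and $L_{\mathrm{exact}}$ well defined.
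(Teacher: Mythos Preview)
Your proposal is correct and follows essentially the same argument as the paper: both handle the $-\infty$ case trivially, and for $L_{\mathrm{unadj}}\ge 0$ both reduce to showing $\int_{-\infty}^0 e^{\delta z}h(z)\,dz \le \int_0^\infty e^{\delta z}h(z)\,dz$ at $\delta = L_{\mathrm{unadj}}$, then conclude $L_{\mathrm{unadj}} \le L_{\mathrm{exact}}$. Your version is slightly more explicit than the paper's in two places---you spell out the substitution $z\mapsto -z$ using evenness of $h$ (the paper simply asserts the ratio is $\le 1$), and you invoke monotonicity of $p^\delta$ in $\delta$ to pass from $p^{L_{\mathrm{unadj}}}\le p^{L_{\mathrm{exact}}}$ to $L_{\mathrm{unadj}}\le L_{\mathrm{exact}}$---but these are exactly the ingredients the paper is using implicitly.
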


\begin{proof}
When Procedure 2 reports $-\infty$ as a confidence lower bound, it is definitely valid and conservative. It remains to show that when Procedure 2 reports a finite confidence lower bound, it is smaller than the confidence lower bound reported by Procedure 2'.

If Procedure 2 reports a finite confidence lower bound $\delta^*$, then $\delta^* \ge 0$. Also
\begin{equation}
\label{eq:naivelo}
\frac{\alpha}{2} = \frac{\int_{D_{12}}^\infty \exp\left(\delta^* z\right) g\left(M_{12} + z, X_2, \ldots, M_{12} - z, \ldots, X_n\right) \,dz}{\int_{-\infty}^\infty \exp\left(\delta^* z\right) g\left(M_{12} + z, X_2, \ldots, M_{12} - z, \ldots, X_n\right) \,dz}
\end{equation}
as Procedure 2 is constructed from an unadjusted two-tail pairwise confidence interval. However, as $\delta^* \ge 0$, we have
\begin{align*}
\frac{\int_{-\infty}^0 \exp\left(\delta^* z\right) g\left(M_{12} + z, X_2, \ldots, M_{12} - z, \ldots, X_n\right) \,dz}{\int_0^\infty \exp\left(\delta^* z\right) g\left(M_{12} + z, X_2, \ldots, M_{12} - z, \ldots, X_n\right) \,dz} & \le 1 \\
\frac{\int_{-\infty}^\infty \exp\left(\delta^* z\right) g\left(M_{12} + z, X_2, \ldots, M_{12} - z, \ldots, X_n\right) \,dz}{\int_0^\infty \exp\left(\delta^* z\right) g\left(M_{12} + z, X_2, \ldots, M_{12} - z, \ldots, X_n\right) \,dz} & \le 2.
\end{align*}
Multiplying this to \eqref{eq:naivelo}, we have
\[\alpha \ge \frac{\int_{D_{12}}^\infty \exp\left(\delta^* z\right) g\left(M_{12} + z, X_2, \ldots, M_{12} - z, \ldots, X_n\right) \,dz}{\int_0^\infty \exp\left(\delta^* z\right) g\left(M_{12} + z, X_2, \ldots, M_{12} - z, \ldots, X_n\right) \,dz},\]
indicating that $\delta^*$ is smaller than the confidence bound that Procedure 2' would report. Hence $\delta^*$ is a valid and conservative.
\end{proof}

Note that Procedure 2 reporting $-\infty$ in case of $\delta^* \le 0$ is rather extreme. In reality, we can always just adopt Procedure 2' in the case when Procedure 1 rejects. In fact, by Procedure 2', the multinomial example for polling in \Cref{sec:iowa} can give a stronger lower confidence bound, that $\pi_{\text{Trump}} / \max_{j \ne \text{Trump}} \pi_{j} \ge 1.108$ (Trump leads the field by at least $10.8\%$).

\section{Verifying Other Ranks: Is the Runner-Up Really the Second Best, etc.?}
\label{sec:stepwise}

Often we will be interested in verifying ranks beyond the winner. More generally, we could imagine declaring that the first $j$ populations are all in the correct order, that is
\begin{equation}
\label{eq:jrankscorrect}
\theta_{[1]} > \cdots > \theta_{[j]} > \max_{k>j} \theta_{[k]}.
\end{equation}

Let $j_0$ denote the largest $j$ for which \eqref{eq:jrankscorrect} is true. Note that $j_0$ is both random and unknown, because it depends on both the data and population ranks. Procedure 3 declares that $j_0 \ge j$ if the unadjusted pairwise tests between $X_{[k]}$ and $X_{[k+1]}$, reject at level $\alpha$ for {\em all} of $k = 1, \ldots, j$.

In terms of the Iowa polling example of \Cref{sec:intro}, we would like to produce a statement of the form ``Trump has the most support, Cruz has the second-most, and Rubio has the third-most.'' Procedure 3 performs unadjusted pairwise tests to ask if Cruz is really the runner-up upon verifying that Trump is the best, and if Rubio is really the second runner-up upon verifying that Cruz is the runner-up, etc., until we can no longer infer that a certain population really holds its rank.

While we aim to declare more populations to be in the correct order, declaring too many populations, i.e.\ out-of-place populations, to be in the right order is undesirable. It is possible to consider false discovery rate (the expected portion of out-of-place populations declared) here, but we restrict our derivation to FWER (the probability of having any out-of-place populations declared).

Formally, let $\hat{j}_0$ denote the number of ranks validated by a procedure (the number of rejections). Then the FWER of $\hat{j}_0$ is the probability that too many rejections are made; i.e.\ $\PP\left[\hat{j}_0 > j_0\right]$. For example, suppose that the top three data ranks and population ranks coincide, but not the fourth ($j_0 = 3$). Then we will have made a Type I error if we declare that the top five ranks are correct ($\hat{j}_0 = 5$), but not if we declare that the top two are correct ($\hat{j}_0 = 2$). In other words, $\hat{j}_0$ is a lower confidence bound for $j_0$.

To show that Procedure 3 is valid, we will prove the validity of a more liberal Procedure 3', described in \Cref{alg:proc3}. Procedure 3 is equivalent to Procedure 3' for the most part, except that Procedure 3 conditions on a larger event $\left\{X_{[j]} \ge \max_{k>j} X_{[k]}\right\}$ in \Cref{lne:event}.

\begin{algorithm}[htbp]
	\SetKwInOut{Input}{input}
	\SetKwInOut{Output}{output}
	\SetKwData{Rejected}{rejected}
	\SetKwComment{Comment}{\# }{}
	\SetKw{Test}{test}
	\SetKw{True}{true}
	\Input{$X_1, \ldots, X_n$}
	\Output{$\hat{j}_0$, an estimate for $j_0$}
	\Comment{Initialization}
	$\tau_j \leftarrow [j]$\;
	\Comment{Consider $\tau_j$ as part of the observation and the fixed realization of the random index $[j]$}
	$X_{\tau_0} \leftarrow \infty$\;
	$j \leftarrow 0$\;
	\Rejected $\leftarrow$ \True\;
	\While{\Rejected}{
		$j \leftarrow j+1$\;
		$D_{\tau_j} \leftarrow X_{\tau_j} - X_{\tau_{j+1}}$\;
		\label{lne:event} Set up the distribution of $D_{\tau_j \tau_{j+1}}$, conditioned on
		\begin{itemize}
		\item the variables $X_{\tau_1}, \ldots, X_{\tau_{j-1}}, X_{\tau_{j+2}}, \ldots, X_{\tau_n]}$, and
		\item the event $\left\{X_{\tau_{j-1}} \ge X_{\tau_j} \ge \max_{k>j} X_{\tau_k)}\right\}$\;
		\end{itemize}
		\Comment{The distribution of $D_{\tau_j \tau_{j+1}}$ depends only on $\theta_{\tau_j} - \theta_{\tau_{j+1}}$ now}
		\Test $H_0: \theta_{\tau_j} - \theta_{\tau_{j+1}} \le 0$ against $H_1: \theta_{\tau_j} - \theta_{\tau_{j+1}} > 0$ according to the distribution of $D_{\tau_j \tau_{j+1}}$\;
		Set \Rejected as the output of the test\;
	}
	$\hat{j}_0 \leftarrow j-1$\;
	\caption{Procedure 3', a more liberal version of Procedure 3}
	\label{alg:proc3}
\end{algorithm}

\begin{theorem}
Procedure 3' is a stepwise procedure that an estimate $\hat{j}_0$ of $j_0$ at the FWER controlled at $\alpha$, where $j_0$ is given by
\[j_0 = \max_j \left\{\theta_{[1]} > \cdots > \theta_{[j]} > \max_{k>j} \theta_{[k]}\right\}.\]
\end{theorem}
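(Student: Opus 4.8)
The plan is to reduce familywise error control to the validity of a single critical test, then reassemble by a disjointness argument in the spirit of the marginal bound \eqref{eq:marginal}. Because Procedure~3' stops at its first non-rejection, $\hat{\jmath}_0>j_0$ can occur only if the procedure rejects at every step up through step $j_0+1$, and in particular rejects at that step. By the definition of $j_0$, each of the first $j_0$ comparisons tests a false null (a genuinely correct strict ordering), whereas at step $j_0+1$ the union null $H:\ \theta_{[j_0+1]}\le\max_{k>j_0+1}\theta_{[k]}$ is true. Hence $\{\hat{\jmath}_0>j_0\}$ is contained in the event that the step-$(j_0+1)$ test falsely rejects this true union null, and it suffices to bound the probability of that event.

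I would decompose this error event according to the selection configuration at the critical step: the ordered identities of the verified top populations together with the identity of the winner of the remaining group $B=\{[j_0+1],\dots,[n]\}$. These configurations are mutually exclusive and partition the error event, and each coincides exactly with the conditioning (selection) event $A'=\{X_{[j_0]}\ge X_{[j_0+1]}\ge\max_{k>j_0+1}X_{[k]}\}$ that Procedure~3' uses at that step. On a fixed configuration the critical test is precisely a winner-versus-runner-up test within $B$; conditioning on the values of the verified top populations leaves a sub-family whose carrier is again Schur-concave by the first part of \Cref{lma:twoprop}, so the apparatus of Part~1 of \Cref{thm:main} applies. Constructing the one-sided selective $p$-values $p_{[j_0+1],b}$ for each $b\in B$, all valid conditional on $A'$, runner-up dominance $p_{[j_0+1],[j_0+2]}=\max_{b}p_{[j_0+1],b}$ together with \Cref{lma:union} shows that rejecting when $p_{[j_0+1],[j_0+2]}\le\alpha$ is a level-$\alpha$ test of the true union null conditional on $A'$. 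Summing the resulting conditional bound $\alpha$ over the disjoint configurations then yields $\PP[\hat{\jmath}_0>j_0]\le\alpha$.

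The main obstacle, and the reason for working with the more liberal Procedure~3' rather than a fully sorted conditioning, is re-establishing runner-up dominance in this conditional law. Unlike Part~1, the event $A'$ carries the extra upper truncation $X_{[j_0+1]}\le X_{[j_0]}$ inherited from the already-verified ranks, so the conditional law of $D_{[j_0+1],[j_0+2]}$ is truncated on both sides and the comparison $p_{[j_0+1],[j_0+2]}\ge p_{[j_0+1],b}$ must be proved anew. I expect the majorization/Schur-concavity argument behind \Cref{fig:comparerays} to carry over with essentially no change: the upper truncation constrains only the shared winning coordinate $X_{[j_0+1]}$, so it cuts the matched numerator integrals of the two competing comparisons at the \emph{same} value of the transfer variable, leaving every pointwise Schur-concavity inequality, and hence the ratio comparison $BC\ge DA$, intact. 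The one point demanding care is that Procedure~3' conditions only on the partial order $A'$, and not on the full data sort; it is precisely this choice that keeps each critical test an exact conditional test while leaving the bottom group and $j_0$ determinate, and since Procedure~3 conditions on a larger event and therefore rejects strictly less, its familywise error control follows a fortiori.
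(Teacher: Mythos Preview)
Your proposal is correct and follows essentially the same route as the paper: reduce the FWER to the probability of a false rejection at the critical step $j_0+1$, partition according to the realized top indices, and show that the step-$(j_0+1)$ test is a valid level-$\alpha$ test of the union null $\widetilde H_{0,j_0+1}$ by re-proving runner-up dominance under the additional upper truncation $X_{[j_0+1]}\le X_{[j_0]}$. The paper organizes the same argument through the sequential goodness-of-fit / BasicStop framework of \citet{Fithian:2015uj}, building $p_j=\max_{k\le j}p_{k*}$ so that the required validity condition \eqref{eq:basicstopreq} can be checked step by step; your direct inclusion $\{\hat\jmath_0>j_0\}\subseteq\{p_{(j_0+1)*}\le\alpha\}$ bypasses that scaffolding and is arguably more elementary, at the cost of not connecting to the general stopping-rule theory.

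One imprecision worth tightening: the configuration event $\{[1]=\tau_1,\dots,[j_0+1]=\tau_{j_0+1}\}$ does \emph{not} coincide with $A'=\{X_{\tau_{j_0}}\ge X_{\tau_{j_0+1}}\ge\max_{k>j_0+1}X_{\tau_k}\}$; it is strictly smaller, since it also enforces $X_{\tau_1}\ge\cdots\ge X_{\tau_{j_0}}$. What makes the argument go through is that Procedure~3' conditions on the \emph{values} $X_{\tau_1},\dots,X_{\tau_{j_0}}$ in addition to $A'$, and marginalizing those values over the ordered region recovers exactly the configuration event. Your later sentence about ``conditioning on the values of the verified top populations'' supplies this, so the argument is sound once you make the smoothing step explicit. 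Your sketch of why the upper truncation preserves the ratio comparison (both competing rays are clipped at the same value of the transfer variable) matches the paper's \Cref{fig:croprays} exactly.
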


\begin{proof}
We will first show that Procedure 3' falls into the sequential goodness-of-fit testing framework proposed by \citet{Fithian:2015uj}. We thus analyze Procedure 3' as a special case of the BasicStop procedure on random hypothesis, described in the same paper. This enables us to construct valid selective $p$-values and derive Procedure 3'.

\paragraph{Application of the sequential goodness-of-fit testing framework}

Upon observing $X_{[1]} \ge \cdots \ge X_{[n]}$, we can set up a sequence of nested models
\[\mathcal{M}_1\left(X\right) \subseteq \cdots \subseteq \mathcal{M}_n\left(X\right), \quad\text{ where } \mathcal{M}_j\left(X\right)^c = \left\{\theta:\; \theta_{[1]} > \cdots > \theta_{[j]} > \max_{k > j} \theta_{[k]}\right\}.\]
If we define the $j$-th null hypothesis as
\[\widetilde{H}_{0j}: \theta_{[j]} \le \max_{k>j} \theta_{[k]},\]
then $\widetilde{H}_{01},\ldots,\widetilde{H}_{0j}$ are all false if and only if $\theta\notin \mathcal{M}_j(X)$.

In other words, $\mathcal{M}_j\left(X\right)$ is a family of distributions that does not have all first $j$ ranks correct. As we will see later, each step in Procedure 3' is similar to testing $\widetilde{H}_{0j}$, stating that without the first $j$ ranks correct, it is hard to explain the observations. Thus, returning $\hat{j}_0 = j$ amounts to rejecting $\widetilde{H}_{01}, \ldots, \widetilde{H}_{0j}$, or equivalently determining that the models $\mathcal{M}_1(X), \ldots, \mathcal{M}_j(X)$ do {\em not} fit the data.

While the null hypotheses $\widetilde{H}_{0j}$ provided intuition in the setting up the nested models, they are rather cumbersome to work with. Inspired by \citet{Fithian:2015uj}, we will instead consider another sequence of random hypothesis that are more closely related to the nest models,
\[H_{0j}: \theta \in \mathcal{M}_j\left(X\right),\]
or equivalently, that $\theta_{[1]}, \ldots, \theta_{[j]}$ are not the best $j$ parameters in order.

Adapting this notation, the FWER can be viewed as $\PP\left[\text{reject } H_{0(j_0+1)}\right]$.

\paragraph{Special case of the BasicStop procedure}

While impractical, Procedure 3' can be thought of as performing all $n$ tests first, producing a sequence of $p$-values $p_j$, and returning
\begin{equation}
\label{eq:basicstop}
\hat{j}_0 = \min\left\{j: p_j > \alpha\right\} - 1.
\end{equation}
This is a special case of the BasicStop procedure. Instead of simply checking that Procedure 3' fits all the requirement for FWER control in BasicStop, we will give the construction of Procedure 3', assuming that we are to estimate $j_0$ with BasicStop.

In general, the FWER for BasicStop can be rewritten as $\PP\left[p_{j_0+1} \le \alpha\right]$. This is however difficult to analyze, as $j_0$ itself is random and dependent on $X$, thus we break the FWER down as follows:
\begin{align*}
\PP\left[p_{j_0+1} \le \alpha\right] & = \sum_j \PP\left[p_{j_0+1} \le \alpha \;\middle|\; j_0 = j\right] \PP\left[j_0 = j\right] \\
& = \sum_j \PP\left[p_{j+1} \le \alpha \;\middle|\; j_0 = j\right] \PP\left[j_0 = j\right] \\
& = \sum_j \PP\left[p_{j+1} \le \alpha \;\middle|\; \theta \in \mathcal{M}_{j+1}\left(X\right) \setminus \mathcal{M}_j\left(X\right)\right] \PP\left[j_0 = j\right].
\end{align*}
We emphasize here that $\theta$ is {\em not} random, but $\mathcal{M}_{j+1}$ is. Thus it suffices to construct the $p$-values such that
\begin{equation}
\label{eq:basicstopreq}
\PP\left[p_j \le \alpha \;\middle|\; \theta \in \mathcal{M}_j\left(X\right) \setminus \mathcal{M}_{j-1}\left(X\right)\right] \le \alpha \quad \text{for all } j.
\end{equation}

\paragraph{Considerations for conditioning}

By smoothing, we are free to condition on additional variables in \eqref{eq:basicstopreq}. A logical choice that simplified \eqref{eq:basicstopreq} is conditioning on the variables $\mathcal{M}_{j-1}\left(X\right)$ and $\mathcal{M}_j\left(X\right)$. Note that the choice of the model $\mathcal{M}_j\left(X\right)$, once again, based solely on the random indices $[1], \ldots, [j]$, so conditioning on both $\mathcal{M}_{j-1}\left(X\right)$ and $\mathcal{M}_j\left(X\right)$ is equivalent to conditioning on the random indices $[1], \ldots, [j]$, which in turns is equivalent to conditioning on the $\sigma$-field generated by the partition of the observation space $X$
\[\left\{\left\{X_{\tau_1} \ge \cdots \ge X_{\tau_j} \ge \max_{k>j} X_{\tau_k}\right\}:\; \tau \text{ is any permutation of } \left(1, \ldots, n\right)\right\},\]
or colloquially, the set of all possible choices of $[1], \ldots, [j]$. Within each set in this partition, the event $\left\{\theta \in \mathcal{M}_j\left(X\right) \setminus \mathcal{M}_{j-1}\left(X\right)\right\}$ is simply $\left\{\theta_{\tau_1} > \cdots > \theta_{\tau_j} \text { and } \theta_{\tau_j} \le \max_{k>j} \theta_{\tau_k}\right\}$, a trivial event.

As a brief summary, we want to construct $p$-values $p_j$ such that
\[\PP_{\substack{\theta_{\tau_1} > \cdots > \theta_{\tau_j} \\ \theta_{\tau_j} \le \max_{k>j} \theta_{\tau_k}}} \left[p_j \le \alpha \;\middle|\; X_{\tau_1} \ge \cdots \ge X_{\tau_j} \ge \max_{k>j} X_{\tau_k}\right].\]

\paragraph{Construction of the $p$-values}

To avoid the clutter in the subscripts, we will drop the $\tau$ in the subscript. Hence our goal is now
\[\PP_{\substack{\theta_1 > \cdots > \theta_j \\ \theta_j \le \max_{k>j} \theta_k}} \left[p_j \le \alpha \;\middle|\; X_1 \ge \cdots \ge X_j \ge \max_{k>j} X_k\right]\]
Construction of $p_j$ for other permutations $\tau$ can be obtained similarly.

There are many valid options for $p_j$ (such as constant $\alpha$). We will follow the idea in the proof of Part 1 of \Cref{thm:main} here. $p_j$ is intended to test $H_{0j}: \theta \in \mathcal{M}_j\left(X\right)$, which is equivalent to the union of the null hypotheses:
\begin{enumerate}
\item $\theta_k \le \theta_{k+1}$ for $k = 1, \ldots, j-1$, and
\item $\theta_j \le \theta_k$ for $k = j+1, \ldots, n$. (The union of these null hypotheses is $\widetilde{H}_{0j}$.)
\end{enumerate}

Since the joint distribution of $X$, restricted to $\left\{X_1 \ge \cdots \ge X_j \ge \max_{k>j} X_k\right\}$, remains in the exponential family, we can construct the $p$-values for each of the hypotheses above by conditioning on the variables corresponding to the nuisance parameters here, similar to the proof of Part 1 of \Cref{thm:main}. Then we can take $p_j$ as the maximum of such $p$-values.

For the hypothesis $H_{0jk}: \theta_j \le \theta_k$, we can construct $p_{jk}$, by considering the survival function of the conditional law
\begin{align*}
&~ \mathcal{L}_{\theta_j = \theta_k} \left(D_{jk} \;\middle|\; \left\{X_1 \ge \cdots \ge X_j \ge \max_{\ell>j} X_\ell\right\}, X_{\setminus\left\{j, k\right\}}, M_{jk}\right) \\
= &~ \mathcal{L}_{\theta_j = \theta_k} \left(D_{jk} \;\middle|\; \left\{X_{j-1} \ge X_j \ge \max_{\substack{\ell>j \\ \ell \ne k}} X_\ell \text{ and } X_j \ge M_{jk}\right\}, X_{\setminus\left\{j, k\right\}}, M_{jk}\right)
\end{align*}

Once again, $X_{j+1} = \max_{\ell > j} X_\ell$ is simply shorthand for simplifying our notation. Now the $p$-values are similar to the ones in \Cref{eq:p1j}, for $k>j$:

\[p_{jk} = \frac{\int_{D_{jk}}^{X_{j-1}} g\left(X_1, \ldots, M_{jk} + z, \ldots, M_{jk} - z, \ldots, X_n\right) \,dz}{\int_{\max\left\{X_{j+1} - M_{jk}, 0\right\}}^{X_{j-1}} g\left(X_1, \ldots, M_{jk} + z, \ldots, M_{jk} - z, \ldots, X_n\right) \,dz}.\]

We can graphically represent $p_{jk}$ in \Cref{fig:croprays}, a diagram analogous to \Cref{fig:comparerays}.

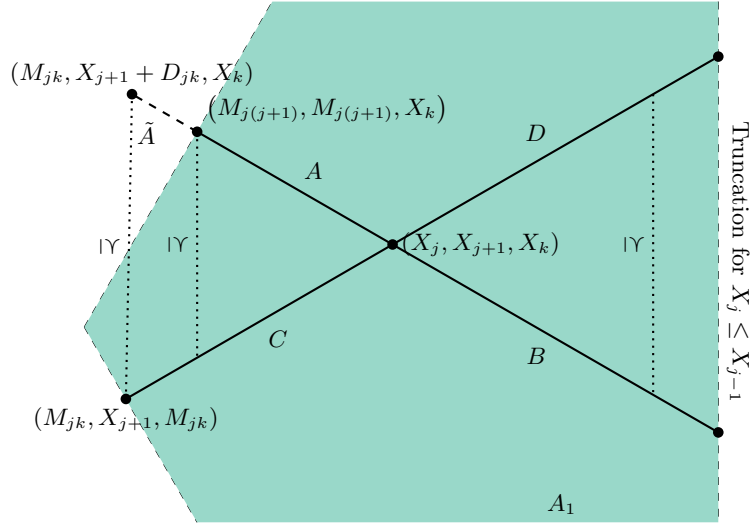
\begin{figure}[htbp]
\centering
\begin{tikzpicture}
	\coordinate (C1) at (60:5);
	\coordinate (C2) at (300:3);
	\coordinate (A) at (60:3);
	\coordinate (B) at ($(A) + (330:3)$);
	\coordinate (C) at ($(0, 0)!(B)!(C2)$);
	\coordinate (D) at ($(A) + (150:1)$);
	\coordinate (C3) at ($(B) + (330:5)$);
	\coordinate (C4) at ($(B) + (30:5)$);
	\coordinate (C5) at ($(C3)!(C1)!(C4)$);
	\coordinate (C6) at ($(C3)!(C2)!(C4)$);
	\draw[dashed] (0, 0) -- (C1);
	\draw[dashed] (0, 0) -- (C2);
	\draw[dashed] (C5) -- (C6) node[midway, above, rotate = -90] {Truncation for $X_j \le X_{j-1}$};
	\fill[mid_green] (0, 0) -- (C1) -- (C5) -- (C6) -- (C2) -- cycle;
	\node[above] at ($(C2)!0.7!(C6)$) {$A_1$};
	\draw[thick] (A) -- (B) node[midway, above right] {$A$};
	\draw[thick] (B) -- (C3) node[midway, below left] {$B$};
	\fill (C3) circle(2pt);
	\fill (A) circle(2pt) node[above right] {$\left(M_{j\left(j+1\right)}, M_{j\left(j+1\right)}, X_k\right)$};
	\fill (B) circle(2pt) node[right] {$\left(X_j, X_{j+1}, X_k\right)$};
	\draw[thick] (C) -- (B) node[midway, below right] {$C$};
	\node[below] at (C) {$\left(M_{jk}, X_{j+1}, M_{jk}\right)$};
	\draw[thick] (B) -- (C4) node[midway, above left] {$D$};
	\fill (C4) circle(2pt);
	\fill ($(0, 0)!(B)!(C2)$) circle(2pt);
	\draw[thick, dashed] (A) -- (D) node[midway, below left] {$\tilde{A}$};
	\fill (D) circle(2pt) node[above] {$\left(M_{jk}, X_{j+1} + D_{jk}, X_k\right)$};
	\draw[thick, dotted] (C) -- (D) node[midway, below, rotate=-90] {$\succeq$};
	\draw[thick, dotted] (A) -- ++(0, -3) node[midway, below, rotate=-90] {$\succeq$};
	\draw[thick, dotted] ($(B) + (30:4)$) -- ($(B) + (330:4)$) node[midway, below, rotate=-90] {$\succeq$};
\end{tikzpicture}
\caption{The two $p$-values constructed corresponds to taking integrals of $g$ along these segments, that lie on a level set of $x_j + x_{j+1} + x_k$. The dashed line corresponds to extension in \eqref{eq:int_extension}. The dotted line on the far right is the truncation that enforces $X_j < X_{j-1}$.}
\label{fig:croprays}
\end{figure}

We have $p_{j(j+1)} \ge \max_{k>j} p_{jk}$ by \Cref{sec:winnerbest}: the upper truncation for $X_j$ can be represented by cropping \Cref{fig:comparerays} along a vertical line, shown in \Cref{fig:croprays}. Considering $p_{j(j+1)}$ is sufficient in rejecting all the $H_{0jk}$. We will take $p_{j*} = p_{j(j+1)}$, noting that this is the $p$-value that Procedure 3' would produce. In fact, $p_{j*}$ is also the $p$-value we would have constructed if we were to reject only $\widetilde{H}_{0j}$.

Upon constructing $p_j$, one should realize that the $p$-values for testing $\theta_k \le \theta_{k+1}$ would have been constructed in earlier iterations of BasicStop, as $p_{k*}$. In other words, $p_j = \max_{k \le j} p_{k*}$ is the sequence of $p$-values that works with BasicStop. However, from \eqref{eq:basicstop},
\[\hat{j}_0 = \min\left\{j:\; \max_{k \le j} p_{k*} > \alpha\right\} - 1 = \min\left\{j:\; p_{j*} > \alpha\right\} - 1,\]
so it is safe to apply BasicStop to $p_{j*}$ directly, yielding Procedure 3'.
\end{proof}

\begin{customthm}[Part 3 of \Cref{thm:main}]
Assume the model~\eqref{eq:expfam} holds and $g\left(x\right)$ is a Schur-concave function. Procedure 3 is a conservative stepwise procedure with FWER no larger than $\alpha$.
\end{customthm}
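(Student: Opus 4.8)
The plan is to deduce Part 3 from the FWER control already established for the more liberal Procedure 3' (\Cref{alg:proc3}). Recall that the only difference between the two procedures lies in the conditioning event at each step: at step $j$, Procedure 3' conditions on $\{X_{[j-1]} \ge X_{[j]} \ge \max_{k>j} X_{[k]}\}$, whereas Procedure 3 conditions on the strictly larger event $\{X_{[j]} \ge \max_{k>j} X_{[k]}\}$, dropping the upper constraint $X_{[j]} \le X_{[j-1]}$. I would show that this coarser conditioning can only inflate the per-step $p$-value, so that on every realization Procedure 3 rejects at a given step only if Procedure 3' does. Coupling the two procedures on the same data then yields $\hat{j}_0^{(3)} \le \hat{j}_0^{(3')}$ pointwise, whence $\{\hat{j}_0^{(3)} > j_0\} \subseteq \{\hat{j}_0^{(3')} > j_0\}$, and the FWER of Procedure 3 is bounded by that of Procedure 3', which is at most $\alpha$.

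Concretely, after reducing to the three-coordinate problem exactly as in Part 1 of \Cref{thm:main}, write $h(z)$ for the carrier density along the relevant ray (with the nuisance coordinates and $M_{[j][j+1]}$ held fixed), so that under the null $\theta_{[j]} = \theta_{[j+1]}$ the observed statistic is $D = D_{[j][j+1]} \ge 0$. The $p$-value used by Procedure 3' carries the upper truncation at $U := X_{[j-1]} - M_{[j][j+1]}$ arising from $X_{[j]} \le X_{[j-1]}$ (the vertical crop in \Cref{fig:croprays}), namely $p_j' = \big(\int_D^U h\big)\big/\big(\int_0^U h\big)$, while Procedure 3 uses the unadjusted pairwise $p$-value, which by the argument of Part 1 is the same ratio with the truncation removed, $p_j^{\mathrm{unadj}} = \big(\int_D^\infty h\big)\big/\big(\int_0^\infty h\big)$. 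The crux is to show $p_j^{\mathrm{unadj}} \ge p_j'$, i.e.\ that relaxing the upper limit increases the $p$-value. This follows from a one-line monotonicity computation: the map $F(t) = \big(\int_D^t h\big)\big/\big(\int_0^t h\big)$ satisfies $F'(t) = h(t)\big(\int_0^D h\big)\big/\big(\int_0^t h\big)^2 \ge 0$ for $t \ge D$, since $h \ge 0$ and $D \ge 0$; hence $F$ is nondecreasing and $p_j^{\mathrm{unadj}} = F(\infty) \ge F(U) = p_j'$. Intuitively, truncating from above makes the observation stochastically smaller, which can only shrink the right-tail $p$-value.

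With the domination $p_j^{\mathrm{unadj}} \ge p_j'$ in hand at every step $j$, the conclusion is immediate: if Procedure 3 rejects at steps $1, \ldots, j$ (each $p_k^{\mathrm{unadj}} \le \alpha$), then so does Procedure 3' (each $p_k' \le \alpha$), so $\hat{j}_0^{(3)} \le \hat{j}_0^{(3')}$ on every realization, and the set inclusion above gives $\PP[\hat{j}_0^{(3)} > j_0] \le \PP[\hat{j}_0^{(3')} > j_0] \le \alpha$. The counting-measure case requires only the same atom-pairing modification used at the end of Part 1. I expect the monotonicity step to be the main (though modest) obstacle, as it is precisely what formalizes the intuition that discarding the information $X_{[j]} \le X_{[j-1]}$ renders the test more conservative; everything else is bookkeeping inherited from the Procedure 3' analysis.
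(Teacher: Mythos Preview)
Your proposal is correct and follows essentially the same approach as the paper: both deduce Part 3 by comparing step-$j$ $p$-values and showing that Procedure 3's unadjusted pairwise $p$-value dominates Procedure 3''s truncated $p$-value, so that $\hat{j}_0^{(3)} \le \hat{j}_0^{(3')}$ pointwise. The paper's proof is a single sentence (``the upper truncation at $X_{j-1}$ is on the upper tail''), and your monotonicity computation $F'(t) = h(t)\big(\int_0^D h\big)\big/\big(\int_0^t h\big)^2 \ge 0$ is exactly the rigorous content of that remark.
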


\begin{proof}
The $p$-values $p_{j\left(j+1\right)}$ obtained in Procedure 3' are always smaller than their counterpart in Procedure 3, as the upper truncation at $X_{j-1}$ is on the upper tail. Therefore Procedure 3 is conservative and definitely valid.
\end{proof}

\section{Discussion}
\label{sec:disc}

Combining ideas from conditional inference and multiple testing, we have proven the validity of several very simple and seemingly ``naive'' procedures for significance testing of sample ranks. In particular, we have shown that an unadjusted pairwise test comparing the winner with the runner-up is a valid significance test for the first rank. Our result complements and extends pre-exisiting analogous results for location and location-scale families with independence between observations. Our approach is considerably more powerful than previously known solutions. We provide similarly straightforward conservative methods for producing a lower confidence bound for the difference between the winner and runner up, and for verifying ranks beyond the first.

Claims reporting the ``winner'' are commonly made in the scientific literature, usually with no significance level reported or an incorrect method applied. For example, \citet{Uhls:2012gf} asked $n = 20$ elementary and middle school students which of seven personal values they most hoped to embody as adults, with ``Fame'' ($8$ responses) being the most commonly selected, with ``Benevolence'' ($5$ responses) second. The authors' main finding --- which appeared in the abstract, the first paragraph of the article, and later a CNN.com headline \citep{Alikhani:2011} --- was that ``Fame'' was the most likely response, accompanied by a significance level of $0.006$, which the authors computed by testing whether the probability of selecting ``Fame'' was larger than $1 / 7$. The obvious error in the authors' reasoning could have been avoided if they had performed an equally straightforward two-tailed binomial test of ``Fame'' vs.\ ``Benevolence,'' which would have produced a $p$-value of $0.58$.

\section*{Reproducibility}
A git repository containing with the code generating the image in this paper is available at \url{https://github.com/kenhungkk/verifying-winner}.

\Urlmuskip=0mu plus 1mu\relax
\bibliographystyle{imsart-nameyear}
\bibliography{papers}

\end{document}